\documentclass{article}

\usepackage[main, nonatbib, final]{neurips_2026}

\usepackage[utf8]{inputenc} 
\usepackage[T1]{fontenc}    
\usepackage{booktabs}       
\usepackage{amsmath, amssymb, amsfonts, amsthm}
\usepackage{nicefrac}       
\usepackage{microtype}      
\usepackage{xcolor}         
\usepackage{hyperref}
\hypersetup{
    colorlinks = true,
    citecolor = blue,
    linkcolor = blue,
    urlcolor = blue,
    breaklinks = true,
}
\usepackage{url}

\usepackage{multirow}
\usepackage{algorithm}
\usepackage{subcaption}
\usepackage{enumitem}
\usepackage{algorithmic}
\usepackage{tikz}
\usepackage{mathabx}
\usepackage{mathtools}
\usepackage{comment}
\usepackage{wrapfig}
\usepackage[square,numbers]{natbib}

\newtheorem{theorem}{Theorem}
\newtheorem{remark}{Remark}
\newtheorem{corollary}{Corollary}
\newtheorem{proposition}{Proposition}
\newtheorem{definition}{Definition}

\title{Modeling Quantum Neural Network Gradient With Reinforcement Learning}

\author{
  Nhan Trong Luu$^{1, 3}$\thanks{Corresponding author. Email: \texttt{luutn@ctu.edu.vn}.}\;\;, Duong Trung Luu$^{2}$, Pham Ngoc Nam$^{3}$ and Truong Cong Thang$^{4}$\\
  $^{1}$College of Information and Communication Technology, Can Tho University, Can Tho, Vietnam\\
  $^{2}$Center for Digital Transformation and Communications, Can Tho University, Can Tho, Vietnam\\
  $^{3}$College of Engineering and Computer Science, VinUniversity, Hanoi, Vietnam\\
  $^{4}$Department of Computer Science and Engineering, The University of Aizu, Fukushima, Japan\\
}

\begin{document}

\maketitle

\begin{abstract}
Training quantum neural networks (QNNs) on near-term hardware remains hampered by two compounding difficulties: the exponential vanishing of gradient variance known as the barren plateau, and the $\mathcal{O}(L \cdot 2^n)$ time and memory cost of differentiating through an $n$-qubit, $L$-layer circuit. We propose RLQ-Grad, a reinforcement-learning-based optimizer in which a classical policy $\pi_\phi$ (a spectrally-normalized PPO agent) learns to propose parameter updates directly, conditioned on the QNN's current parameters, loss, accuracy, and previous update. Because the surrogate gradient is emitted by a classical network rather than obtained by differentiating through the unitary $U(\theta)$, its variance is not constrained by the barren plateau concentration bound, and its cost scales with the number of trainable parameters rather than the Hilbert-space dimension. We prove these properties formally and verify them on a hardware-efficient ansatz across four supervised benchmarks with up to $n=20$ qubits. RLQ-Grad preserves a near-flat gradient-variance curve where backpropagation, parameter-shift, and adjoint differentiation decay by 1 to 2 orders of magnitude. Accounting for the full training pipeline (PPO rollouts, actor-critic updates, and optimizer states), RLQ-Grad needs under 2 MB of memory and runs $2490\times$, $7876\times$, and $673\times$ faster per iteration than these three methods at $n=20$. It improves top-1 accuracy by up to $+10\%$ over gradient-based baselines on circuits of up to 12 qubits, and matches dedicated barren plateau mitigation methods on CIFAR-10 at 14 to 20 qubits, where evolutionary and gradient-free optimizers collapse to chance.
\end{abstract}

\section{Introduction}

The emergence of Noisy Intermediate-Scale Quantum (NISQ) devices has established variational quantum algorithms (VQAs) as a leading paradigm for achieving practical quantum advantage~\cite{verdon2019learning, kolle2024study, ostaszewski2021reinforcement}. These hybrid quantum-classical frameworks, which include Quantum Neural Networks (QNNs), rely on the iterative optimization of parameterized quantum circuits (PQCs) to minimize a problem-specific cost function~\cite{verdon2019learning, fodera2024reinforcement, kundu2025tensorrl}. However, the optimization landscape of QNNs is notoriously challenging due to non-convexity, the presence of low-quality local optima, the "barren plateau" phenomenon~\cite{mcclean2018barren}, and the high computational cost of circuit evaluations~\cite{khairy2020learning, verdon2019learning, khairy2019reinforcement}. To address these hurdles, a burgeoning body of research has explored the application of reinforcement learning (RL) to automate and enhance the optimization of both the variational parameters and the underlying circuit architectures.

Initial efforts in this domain focused on utilizing RL to find optimal variational parameters for fixed-structure circuits, such as the Quantum Approximate Optimization Algorithm (QAOA)~\cite{khairy2019reinforcement, khairy2020learning} which formulated parameter optimization as a learning task, where objective is to train RL policy networks on small problem instances to exploit geometrical regularities that generalize to larger, unseen instances. Similarly, \cite{verdon2019learning} introduced a "learning to learn" framework using classical recurrent neural networks (RNNs) as black-box controllers to propose parameter updates for QAOA and Variational Quantum Eigensolvers (VQE), eventually highlights the potential for meta-learning to identify problem-class-specific heuristics that significantly reduce the number of optimization iterations required. Further advancing this theme, \cite{wauters2020reinforcement} proposed an RL scheme for feedback quantum control, demonstrating that an agent can learn smooth, optimal adiabatic schedules for quantum Ising chains that are successfully transferable across system sizes.

Parallel to parameter optimization, some research has shifted toward Quantum Architecture Search (QAS) and circuit optimization, where RL agents autonomously design or refine the circuit structure itself. \cite{fosel2021quantum} demonstrated that deep RL could learn generic strategies to reduce circuit depth and gate count by selecting sequences of logically equivalent transformations. For building circuits from scratch, \cite{kuo2021quantum} proposed a DRL framework to search for gate sequences that synthesize specific quantum states without prior physical knowledge. \cite{ostaszewski2021reinforcement} and \cite{kundu2025tensorrl} also further refined QAS for chemistry applications, employing feedback-driven curriculum learning to adapt problem complexity and discover gate-efficient ansatzes for molecules like LiH.

Recent advancements have focused on improving the scalability and robustness of these RL-based methods. \cite{lockwood2021optimizing} evaluated SAC-based optimizers capable of minimizing loss across random circuits of varying sizes and objectives. To tackle the prohibitive simulation costs of large systems, \cite{kundu2025tensorrl} introduced TensorRL-QAS, which warm-starts the RL search using tensor network approximations to narrow the search space and accelerate convergence. Additionally, specialized architectures for machine learning tasks have been explored, such as \cite{dai2024quantum} (RL-QMLAS) aim to perform QAS for QAOA problems and \cite{fodera2024reinforcement} (RLVQC), which discovered novel "Ryz-connected" ansatz families for combinatorial optimization. Finally, \cite{kolle2024study} have systematically investigated the impact of specific techniques (such as data re-uploading~\cite{perez2020data}, output scaling, and metaheuristic strategies~\cite{gharehchopogh2023quantum}) to stabilize training and reduce the parameter count of VQCs within RL environments. 

While RL has driven notable advances in QNN research, its integration into supervised learning for QNNs remains largely underexplored. Early efforts framing RL as an exhaustive parameter search over QNN configurations prove fundamentally ill-suited to larger-scale problems~\cite{lockwood2021optimizing}. Motivated by these limitations, we make the following contributions:
\begin{itemize}[leftmargin=*]
    \item \textbf{A novel RL-based optimizer for QNNs:} We introduce \textbf{RLQ-Grad}, an RL-based optimizer in which an agent learns to accurately approximate QNN gradients during training, rather than performing exhaustive parameter search or act as a complete replacement of traditional optimizer. Across a range of benchmarks, RLQ-Grad consistently outperforms conventional gradient-based optimizers in classification accuracy.
    \item \textbf{Memory reduction and accelerated computation:} Beyond accuracy, RLQ-Grad circumvents the gradient variance decay caused by the barren plateau phenomenon, while replacing the $2^n$ dependence of gradient computation time and memory with a cost that grows only linearly in the number of trainable QNN parameters. RLQ-Grad maintains an approximately flat gradient-variance profile, whereas backpropagation, parameter-shift, and adjoint differentiation exhibit declines of 1 to 2 orders of magnitude. Under full-pipeline accounting (actor, critic, gradient buffers, and Adam states), RLQ-Grad requires $1.95$ MB at $n=20$ qubits (versus $6174$ MB for backpropagation) and achieves per-iteration CPU speedups of $\sim\!2490\times$, $\sim\!7876\times$, and $\sim\!673\times$ over backpropagation, parameter-shift, and adjoint differentiation, respectively, while also transferring directly to GPU execution.
    \item \textbf{Scaling to 20 qubits against stronger baselines:} RLQ-Grad improves top-1 accuracy by up to $+10\%$ over the strongest gradient-based baseline from 2 to 12 qubits. On CIFAR-10 extended to 20 qubits ($P_{\mathrm{QNN}}=600$), it keeps improving where backpropagation stagnates, matches dedicated barren plateau mitigation (layerwise learning~\cite{skolik2021layerwise} and Gaussian initialization~\cite{zhang2022escaping}), and clearly outperforms evolutionary and gradient-free optimizers (CMA-ES, sep-CMA-ES, SPSA), which collapse to chance.
\end{itemize}
To the best of our knowledge, this work is the first to successfully employ a reinforcement learning agent as a gradient-based optimizer for QNNs, reducing the memory complexity of gradient computation to linear scaling while remaining acceleratable using modern GPU.\footnote{The source code for our paper is available on Zenodo at \url{https://doi.org/10.5281/zenodo.22956233}.}

\section{Proposed method}

\subsection{Problem Formulation}

Let $\mathcal{D}=\{(x_i,y_i)\}_{i=1}^{N}$ denote a supervised learning dataset where $x_i \in \mathbb{R}^{d}$ represents the input features and $y_i \in \{0, 1, ... n\}$ denotes the $n$-class target label. We consider training a parameterized quantum neural network QNN defined by a set of parameters $\theta \in \mathbb{R}^{P}$. The goal of QNN supervised training is to minimize an empirical risk function of:
\begin{equation}
\min\mathcal{L}(\theta) = \frac{1}{N}\sum_{i=1}^{N}\ell(f(x_i, \theta),y_i),
\end{equation}
where $f_{\theta}(x)$ denotes the output of the quantum model and $\ell(\cdot)$ is a certain type of supervise learning loss function, and traditional QNN optimization relies on gradient estimation techniques such as backpropagation, adjoint differentiation~\cite{jones2020efficient} or parameter-shift rule~\cite{mitarai2018quantum}. However, these approaches often suffer from high computational cost and gradient vanishing phenomena. Instead of explicitly computing the gradient $\nabla_\theta \mathcal{L}$, we propose to learn a gradient generator using reinforcement learning.

\subsection{QNN selection}

In most of our problem, we consider a HEA-like \cite{leone2024practical} variational quantum circuit composed of $n$ qubits and $L$ layers. Illustration of HEA design used in our work can be found at Figure~\ref{fig:hea}.

\begin{figure*}[t]
    \centerline{\includegraphics[width=\linewidth]{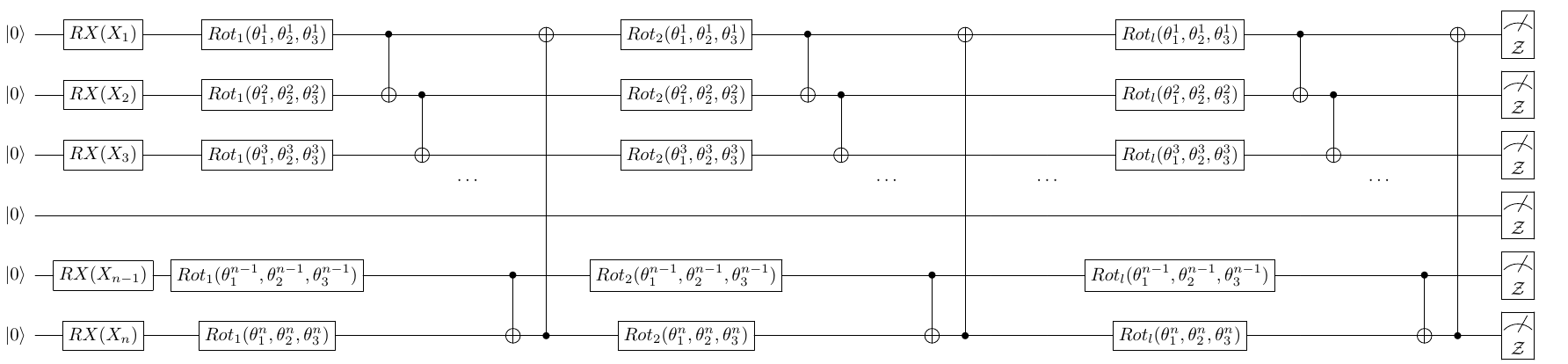}}
    \caption{Illustration of a $n$-qubit $l$-depth HEA circuit utilized in the majority of our experiments.}
    \label{fig:hea}
    \vspace{-0.7\intextsep}
\end{figure*}

In our selected HEA, given classical inputs $x$, projected inputs over a fully connected layer $\mathrm{FC}_{in}$ are encoded through $R_x$ rotation gates as:
\begin{equation}
    U_{\text{enc}}(x) = \prod_{i=1}^{n} R_x(\mathrm{FC}_{in}(x,\mathbf{W}_{in}, \mathbf{b}_{in})_i) \text{ where } \mathrm{FC}(x) = x^\dagger\mathbf{W}+ \mathbf{b}.\\
\end{equation}

Each variational layer consists of parameterized rotations followed by entanglement operations. Let $\theta = \{\theta_{l,q}^{(1)},\theta_{l,q}^{(2)},\theta_{l,q}^{(3)}\}$ denote the parameters of a rotation gate on qubit $q$ in layer $l$, $\mathbf{W} = \{\mathbf{W}_{in}, \mathbf{W}_{out}\}$ and $\mathbf{b} = \{\mathbf{b}_{in}, \mathbf{b}_{out}\}$ for parameter of data encode-decode FC layers. The variational unitary is therefore:
\begin{equation}
U(\theta) =
\prod_{l=1}^{L}
\left(
U_{\text{ent}}
\prod_{q=1}^{n}
Rot(\theta_{l,q})
\right),
\end{equation}
where $Rot(\theta_{l,q})$ represents a general single-qubit rotation and $U_{\text{ent}}$ denotes the entangling operations. Given an input $x$, the circuit prepares the quantum state:
\begin{equation}
|\psi(x,\theta)\rangle
=
U(\theta)U_{\text{enc}}(x)|0\rangle^{\otimes n}.
\end{equation}
The model output is obtained by measuring a Pauli observable $Z$, followed by a fully-connected layer $\textrm{FC}$ that map output measurements to desired class counts:
\begin{equation}
f(x, \theta, \mathbf{W}, \mathbf{b}) = \textrm{FC}_{out}\left( 
\langle \psi(x,\theta) | Z | \psi(x,\theta) \rangle, \mathbf{W}_{out}, \mathbf{b}_{out} \right),
\end{equation}
and resulted output can be utilized as prediction in supervised learning problem\footnote{HEA can be considered a universal approximator, see Appendix \ref{apd:uat} for more information.}. To construct the HEA circuit compatible to our workflow, we utilized both Pennylane~\cite{bergholm2018pennylane} and Torchquantum~\cite{hanruiwang2022quantumnas} at the base of our QNN framework.

While showing promising potential as an alternative to construct predictor in supervise learning, HEA and many other quantum classifier tends to suffer from gradient variance decay, stem from quantum barren plateau~\cite{larocca2025barren} that prohibit them to converge towards an optimal approximation of given problems\footnote{We intentionally selected QNN that are subject to gradient decay to prove that our method is decay free. See Appendix \ref{apd:barren_plateau} for more information regarding definition of plateau.}.

\subsection{RLQ-Grad algorithm}

Instead of computing analytical gradients, we formulate QNN training as a RL problem, where a RL agent interacts with the QNN training process and learns to generate parameter updates and termed our methodology RLQ-Grad. The intuition behind RLQ-Grad is to train a RL agent aim to learn a policy $\pi_{\phi}(a_t|s_t)$ parameterized by $\phi$, which outputs gradient estimates conditioned on the QNN training state. Through interaction with the QNN training environment, the policy gradually learns to generate gradient directions that effectively optimize the quantum model. Pseudocode for RLQ-Grad can be found at Algorithm \ref{alg:rl_qnn}.

\begin{algorithm*}[t]
    \caption{Pseudocode for RLQ-Grad algorithm}
    \label{alg:rl_qnn}
    \begin{algorithmic}[1]
    
    \REQUIRE Train dataset $\mathcal{D}_{\mathrm{train}}=\{(x_i,y_i)\}_{i=1}^{N}$ of $N$ samples with stochastic batch $\mathcal{B} \subset \mathcal{D}_{\mathrm{train}}$, parameterized QNN model $f(x, \theta, \mathbf{W}, \mathbf{b})$, RL policy $\pi_\phi$, total training steps $T$.
    
    \STATE Initialize all QNN parameters $\theta_0$ and previous gradient $g_{-1}$ at zero.
    \STATE Compute initial statistic on first stochastic batch $\mathcal{B}_0$, including loss $\mathbb{E}_{(x,y)\sim \mathcal{B}_0}\left[\mathcal{L}(f(x, \theta_0, \mathbf{W}, \mathbf{b}), y) \right]$ and accuracy $\mathbb{E}_{(x,y)\sim \mathcal{B}_0}\left[acc(f(x, \theta_0, \mathbf{W}_0, \mathbf{b}_0), y)\right]$.
   
    \FOR{$t = 0$ to $T-1$}
        \STATE Construct environment state $s_t = \left[  
        \begin{aligned}
             & \theta_t, \mathbb{E}_{(x,y)\sim \mathcal{B}_t}\left[\mathcal{L}(f(x, \theta_t, \mathbf{W}_t, \mathbf{b}_t), y) \right], \\
             & \mathbb{E}_{(x,y)\sim \mathcal{B}_t}\left[acc(f(x, \theta_t, \mathbf{W}_t, \mathbf{b}_t), y)\right], g_{t-1}
        \end{aligned}\right]$
        \STATE Sample action (gradient proposal) from RL agent $g_t \sim \pi_\phi(\cdot | s_t)$
        \STATE Assign proposed gradient $g_t$ to QNN parameter.
        \STATE Calculate the rest of the gradient graph components and update with backpropagation.
        \STATE Evaluate updated model training loss $\mathbb{E}_{(x,y)\sim \mathcal{B}_t} \left[\mathcal{L}(f(x, \theta_{t+1}, \mathbf{W}_{t+1}, \mathbf{b}_{t+1}), y) \right]$ and training accuracy $\mathbb{E}_{(x,y)\sim \mathcal{B}_t} \left[ acc(f(x, \theta_{t+1}, \mathbf{W}_{t+1}, \mathbf{b}_{t+1}), y)\right]$.
        \STATE Compute reward $\mathbb{E}_{(x,y)\sim \mathcal{B}_t} \left[ acc(f(x, \theta_t, \mathbf{W}_t, \mathbf{b}_t), y) + \frac{1}{\mathcal{L}(f(x, \theta_t, \mathbf{W}_t, \mathbf{b}_t),y) +\epsilon}\right]$.
        \STATE Store transition $(s_t,g_t,r_t,s_{t+1})$ and update RL policy $\pi_\phi$ using collected transitions.
    \ENDFOR
    \RETURN Final trained QNN parameters $\theta$
    \end{algorithmic}
\end{algorithm*}

In most of our experiments, we use Proximal Policy Optimization (PPO)~\cite{schulman2017proximalpolicyoptimizationalgorithms} as our RL agent in our RLQ-Grad algorithm, while also incorporating spectral normalization (SN)~\cite{gogianu2021spectral} improve performance and reducing variance during training. Details on training settings and justification for model selection can be found in Appendix~\ref{apd:misc}.

\subsubsection{State representation}

Given train dataset $\mathcal{D}_{\mathrm{train}}=\{(x_i,y_i)\}_{i=1}^{N}$ of $N$ samples with stochastic batch $\mathcal{B}_t \subset \mathcal{D}_{\mathrm{train}}$, at training step $t$ the environment state is defined as:
\begin{equation}
\begin{aligned}
    & s_t = \left[  \theta_t, \mathbb{E}_{(x,y)\sim \mathcal{B}_t} \left[\mathcal{L}(f(x, \theta_t, \mathbf{W}_t, \mathbf{b}_t), y) \right], \mathbb{E}_{(x,y)\sim \mathcal{B}_t} \left[acc(f(x, \theta_t, \mathbf{W}_t, \mathbf{b}_t), y)\right], g_{t-1}\right],\\
    & g_{t-1} = \mathrm{flatten}\left(\frac{\partial  \,\mathbb{E}_{(x,y)\sim \mathcal{B}_t}\left[\mathcal{L}(f(x, \theta_t, \mathbf{W}_t, \mathbf{b}_t), y) \right]}{\partial\,\mathbb{E}_{(x,y)\sim \mathcal{B}_t}\left[f(x, \theta_t, \mathbf{W}_t, \mathbf{b}_t) \right]}\cdot\frac{\partial\,\mathbb{E}_{(x,y)\sim \mathcal{B}_t}\left[f(x, \theta_t, \mathbf{W}_t, \mathbf{b}_t) \right]}{\partial\,\theta_{t-1}}\right)
\end{aligned}
\end{equation}
where $\theta_t$ are the current QNN parameters, $\mathcal{L}$ is the mean training loss, $acc$ is the mean training accuracy, followed by all flattened trainable parameters' gradient $g_{t-1}$ proposed by the agent at the previous step. Proposed gradient $g_{t}$ is equivalent to RL agent outputs, interpreted as a gradient estimate $a_t = g_t \in \mathbb{R}^{P}$ of $P$ optimizable parameter, and after the agent proposed the gradient $g_t$, it was assigned to QNN parameter gradient. 

Finally, we calculate other components in the backpropagation graph (such as linear weights gradient $\frac{\partial \,\mathbb{E}_{(x,y)\sim \mathcal{B}_t}\left[\mathcal{L}(f(x, \theta_t, \mathbf{W}_t, \mathbf{b}_t), y) \right]}{\partial\,\mathbf{W}_{t}}$ and biases gradient $\frac{\partial  \,\mathbb{E}_{(x,y)\sim \mathcal{B}_t}\left[\mathcal{L}(f(x, \theta_t, \mathbf{W}_t, \mathbf{b}_t), y) \right]}{\partial\,\mathbf{b}_{t}}$, etc.) and update $f(x, \theta, \mathbf{W}, \mathbf{b})$ with Adam optimizer~\cite{kingma2014adam} (learning rate $lr=1e-3$, weight decay of $1e-4$, momentum $\beta = (0.9, 0.999)$) along with cosine annealing scheduler~\cite{loshchilov2016sgdr}.

\subsubsection{Reward function}

We formulate our reward function $r_t$ at timestep $t$ using training loss statistic $\mathcal{L}_t$ and accuracy statistic $acc_t$ over stochastic batch of training set $\mathcal{B}_t \subset \mathcal{D}_{\mathrm{train}}$ of train set $\mathcal{D}_\mathrm{train}$ as:
\begin{equation}
    r_t = \mathbb{E}_{(x,y)\sim \mathcal{B}_t} \left[ acc(f(x, \theta_t, \mathbf{W}_t, \mathbf{b}_t), y) + \frac{1}{\mathcal{L}(f(x, \theta_t, \mathbf{W}_t, \mathbf{b}_t),y) +\epsilon}\right],
\end{equation}
where $\epsilon$ is a small constant to prevent zero division. The goal to encourages the agent to produce gradients that simultaneously reduce training loss and improve predictive performance.

\section{Results}

\subsection{Surrogate gradient advantages of RLQ-Grad}

\begin{wrapfigure}{l}{0.6\textwidth}
\vspace{-\intextsep}
\centering
\caption{Comparison of gradient variance of the first trainable parameter in the HEA $\mathrm{Var}(\langle \partial\, \theta^{(1)}_{1,1} \, E \rangle)$ of gradient computation strategies for a depth-$L$, $n$-qubit QNN with $P_{\mathrm{QNN}}$ trainable parameters. Variance results for RLQ-Grad are obtained from post training agent on BC dataset on Section \ref{sec:main_res}.}
\includegraphics[width=\linewidth]{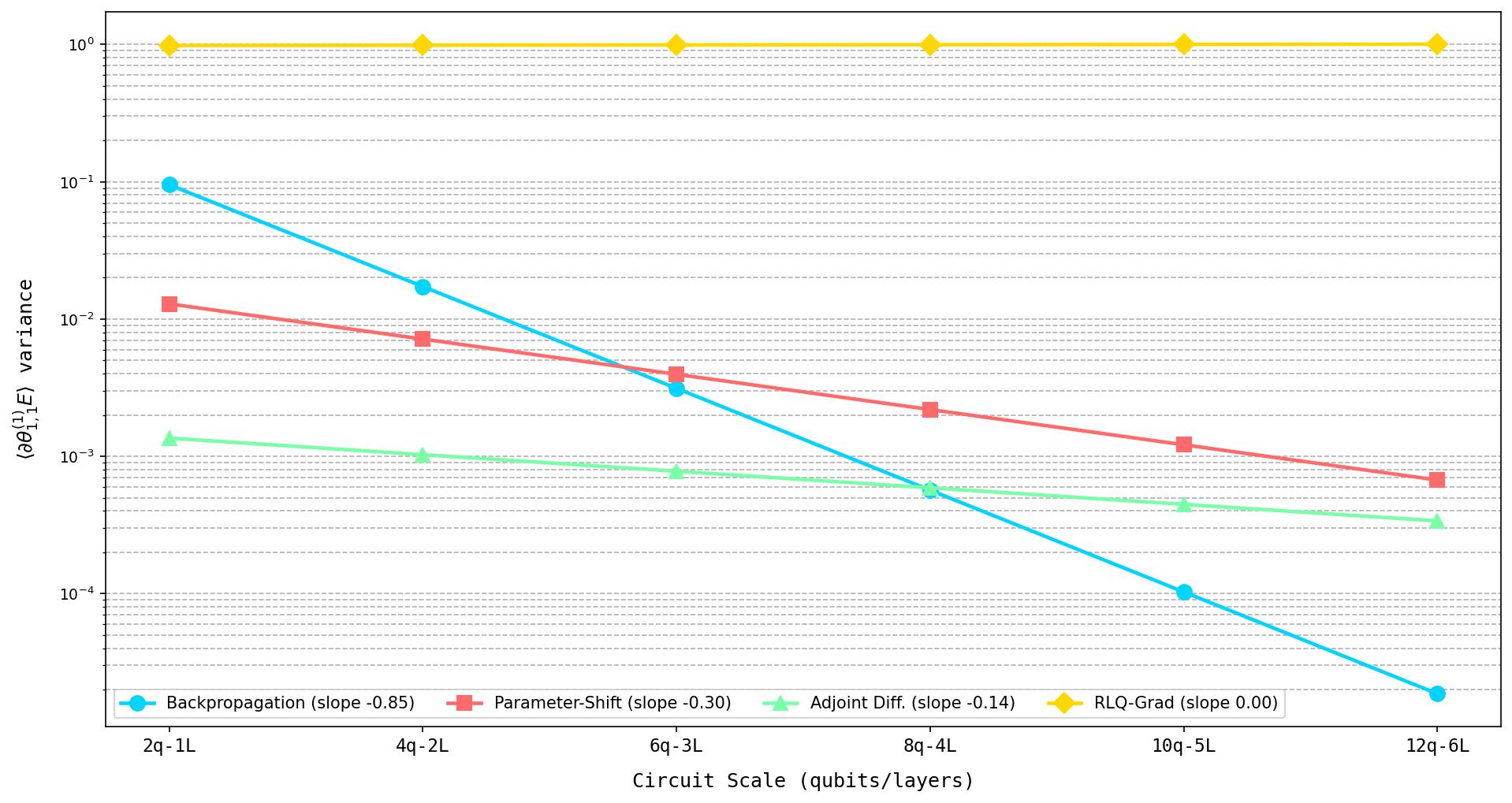}
\label{fig:plateau_mea}
\vspace{-0.8\intextsep}
\end{wrapfigure}

We assess the advantages of surrogate gradient produced by RLQ-Grad in light of the recently established unified QNN loss landscape theory of \cite{anschuetz2024unified}. We claim that proposed RLQ-Grad algorithm can generate gradient that free from gradient decay caused by quantum barren plateau~\cite{larocca2025barren} over the following theorem:
\begin{theorem}[Gradient variance decay free]\label{thm:bypass}
We established some notations as follow: $\mathcal{A} \cong \bigoplus_\alpha \mathcal{A}_\alpha$ denotes the Jordan algebra generated by $\{U_\theta^\dagger O U_\theta\}_\theta$, $\mathrm{Aut}(\mathcal{A}_\alpha)$ its automorphism group, and $r_\alpha = \lfloor \mathrm{Tr}_\alpha(O_\alpha)^2/\mathrm{Tr}_\alpha(O_\alpha^2) \rceil$ the degrees of freedom of the $\alpha$-th simple component. Let $U(\theta)$ be a QNN whose associated Jordan algebra $\mathcal{A}$ satisfies:
\begin{equation}
    \mathrm{Var}_\theta[\ell(\rho;\theta)]
    = \sum_\alpha
    \frac{\mathrm{Tr}(O_\alpha)^2\,\mathrm{Tr}(\rho^\alpha)^2}
         {\mathrm{dim}_\mathbb{R}(\mathrm{Aut}(\mathcal{A}_\alpha))}
    \;\in\; \mathcal{O}\!\left(\mathrm{poly}(\log N)^{-1}\right),
    \label{eq:jaws_bp}
\end{equation}
i.e., the circuit is in a barren plateau\footnote{Original derivation can be found in Definition~7 of \cite{anschuetz2024unified}.}. Under RLQ-Grad, the parameter update at step $t$ is $\theta_{t+1} = \theta_t - \eta\,g_t$ where $g_t\sim\pi_\phi(\cdot\mid s_t)$ is the RL agent's output. Since $g_t$ is not computed by differentiating through
$U(\theta)$, the vanishing of Equation~\ref{eq:jaws_bp} places no constraint on $g_t$.
The variance decay is therefore structurally bypassed.
\end{theorem}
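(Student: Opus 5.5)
To make the statement provable, I would first turn its informal conclusion into a precise claim. The claim is that the distribution of $g_t$, and in particular $\mathrm{Var}[g_t]$, is determined by $\pi_\phi$ and $s_t$ alone. So no upper bound of the form in Equation~\ref{eq:jaws_bp} on $\mathrm{Var}_\theta[\partial_\theta \ell]$ can by itself force $\mathrm{Var}[g_t]$ to be small.

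\textbf{Step 1: the bound only restricts quantities computed through $U(\theta)$.} I would recall from \cite{anschuetz2024unified} that Equation~\ref{eq:jaws_bp} comes from averaging $\mathrm{Tr}(\rho\, U_\theta^\dagger O U_\theta)$ over the automorphism group of $\mathcal{A}$. The resulting concentration applies to the loss $\ell(\rho;\theta)$ and to its derivatives $\partial_{\theta_j}\ell$, which are themselves expectations of observables in $\mathcal{A}$ via the parameter-shift identity. Backpropagation, parameter-shift and adjoint differentiation all return exactly these quantities, so they inherit the bound. Any map that does not factor through these expectations is outside the hypothesis.

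\textbf{Step 2: conditional decomposition of the variance of $g_t$.} I would model the policy as Gaussian, $g_t\sim\mathcal{N}(\mu_\phi(s_t),\Sigma_\phi(s_t))$, as in PPO. The law of total variance then gives
\begin{equation}
\mathrm{Var}[g_t]=\mathbb{E}_{s_t}\!\left[\Sigma_\phi(s_t)\right]+\mathrm{Var}_{s_t}\!\left[\mu_\phi(s_t)\right]\succeq \mathbb{E}_{s_t}\!\left[\Sigma_\phi(s_t)\right].
\end{equation}
The right-hand side depends only on the classical network. For example, a state-independent log-std gives $\Sigma_\phi=\mathrm{diag}(e^{2\log\sigma})$, independent of $n$. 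The barren plateau makes the loss and accuracy entries of $s_t$ concentrate, which can shrink the mean term. It cannot shrink the covariance term, and $\theta_t$ and $g_{t-1}$ also enter $s_t$ without passing through $U$. So $\mathrm{Var}[g_t]$ is bounded below by a quantity on which Equation~\ref{eq:jaws_bp} places no constraint, which is the precise form of ``structurally bypassed''.

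\textbf{Step 3: conclusion, and the main obstacle.} With $\theta_{t+1}=\theta_t-\eta g_t$, the update magnitude is controlled by this lower bound rather than by the decaying bound, which gives the stated conclusion. The main obstacle is the caveat that makes the statement meaningful. Large variance does not mean $g_t$ is aligned with $\nabla\mathcal{L}$. The reward is built from $\mathcal{L}$ and accuracy, and both concentrate in a barren plateau, so the learning signal for $\phi$ can itself become exponentially weak. I would therefore state the theorem as a claim about variance, not about descent. I would also point to the fact that $\theta_t$ and $g_{t-1}$ carry non-concentrated information in $s_t$ as the reason the policy can still be informative.
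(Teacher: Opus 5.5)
Your proposal is correct and rests on the same observation as the paper's proof: $g_t$ is emitted by a classical network rather than obtained by differentiating through $U(\theta)$. Where it differs is in how the conclusion is made precise. The paper argues qualitatively. The law of $g_t$ is fixed by $\phi$ and $s_t$, so (it asserts) $g_t$ is statistically independent of $\nabla_\theta\mathcal{L}$, and Equation~\ref{eq:jaws_bp} says nothing about it. You instead take a Gaussian policy and use the law of total variance to get the explicit floor $\mathrm{Var}[g_t]\succeq\mathbb{E}_{s_t}[\Sigma_\phi(s_t)]$, which for a state-independent log-std does not depend on $n$.

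Your route buys an actual inequality and a correct account of what the plateau can reach. The state $s_t$ contains the batch loss and accuracy, which are forward evaluations of $U(\theta_t)$, so $\mu_\phi(s_t)$ can depend on the landscape. The paper's own Proposition~\ref{prop:reward_info} states that the loss is correlated with the gradient magnitude. The independence claim in the paper's proof is therefore overstated, and your decomposition never needs it.

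The paper's route buys generality. It assumes no policy family and also covers a deterministic actor; the proof itself switches to $g_t=\pi_\phi(s_t)$ in its second paragraph. In that case your covariance floor degenerates to $\mathrm{Var}[g_t]\succeq 0$. You would then have to fall back on the qualitative point that the remaining entries of $s_t$, namely $\theta_t$ and $g_{t-1}$, are not gradients through $U$.

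Both arguments establish only that the update does not inherit the variance collapse, not that it is a useful descent direction. You say this explicitly. The paper concedes it by calling the theorem structural and valid ``irrespective of whether the agent produces useful gradient estimates''.
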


\begin{proof}
    See Appendix~\ref{proof:thm:bypass}.
\end{proof}

To provide a more practical and comprehensive evaluation of the decay resistance of our method with respect to (w.r.t.) well-established gradient-based approaches, Figure~\ref{fig:plateau_mea} reports the gradient variance of the first trainable parameter in the HEA $\mathrm{Var}(\langle \partial\, \theta^{(1)}_{1,1} \, E \rangle)$, as a function of both qubit count and circuit depth\footnote{Denoted in the format $n\mathrm{q}L\mathrm{d}$ for an $n$-qubit, $L$-layer HEA. This notation is used universally in our paper.}. The comparison includes several gradient estimation techniques, namely traditional backpropagation, parameter-shift~\cite{mitarai2018quantum}, adjoint differentiation~\cite{jones2020efficient}, and RLQ-Grad (ours). As illustrated in Figure~\ref{fig:plateau_mea}, our method exhibits robustness against decay, indicating a clear advantage over the baseline approaches.

Still, even when $g_t$ is free of variance decay, does the agent's own training suffer from an analogous pathology? We can analyze this question by forming the follow theorem:
\begin{theorem}[Plateau unrelated with agent optimization]\label{thm:agent_grad}
The RL policy $\pi_\phi$ is a classical neural network (e.g.\ an MLP). Its parameter
gradient $\nabla_\phi \mathcal{L}_{\mathrm{RL}}$ is computed entirely by classical
backpropagation and is therefore not subject to the variance decay phenomenon.
\end{theorem}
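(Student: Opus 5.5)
The plan is to make the claim precise as a structural statement about the computational graph of $\nabla_\phi \mathcal{L}_{\mathrm{RL}}$, mirroring the argument behind Theorem~\ref{thm:bypass}. I would not attempt a quantitative lower bound on the gradient magnitude; I would show that the dependence through $U(\theta)$, which is what gives rise to the decay in Equation~\ref{eq:jaws_bp}, never enters that gradient.

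\textbf{Step 1: write out the PPO objective.} I would write the clipped PPO objective explicitly as
\begin{equation*}
\mathcal{L}_{\mathrm{RL}}(\phi) = \hat{\mathbb{E}}_t\Big[-\min\big(\rho_t(\phi)\hat{A}_t,\ \mathrm{clip}(\rho_t(\phi),1-\epsilon_c,1+\epsilon_c)\hat{A}_t\big) + c_v\big(V_\phi(s_t)-\hat{R}_t\big)^2 - c_e\,\mathcal{H}[\pi_\phi(\cdot\mid s_t)]\Big],
\end{equation*}
where $\rho_t(\phi)=\pi_\phi(g_t\mid s_t)/\pi_{\phi_{\mathrm{old}}}(g_t\mid s_t)$.

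\textbf{Step 2: freeze the quantum-dependent quantities.} The point is that, inside one PPO update, the rollout tuples $(s_t,g_t,r_t,s_{t+1})$ stored in Algorithm~\ref{alg:rl_qnn} are fixed data. Consequently $\hat{A}_t$ and $\hat{R}_t$ are constants with respect to $\phi$. This holds even though $s_t$ and $r_t$ were produced by evaluating the QNN: they are treated as detached inputs and never differentiated.

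\textbf{Step 3: apply the chain rule.} It follows that $\nabla_\phi \mathcal{L}_{\mathrm{RL}}$ is a finite sum of terms such as $\hat{A}_t\,\nabla_\phi \log\pi_\phi(g_t\mid s_t)$, $(V_\phi(s_t)-\hat{R}_t)\nabla_\phi V_\phi(s_t)$ and $\nabla_\phi\mathcal{H}$. Each of these is obtained by classical backpropagation through the actor and critic MLPs. Their depth and width are independent of $n$, and their input dimension is $2P+2$, with $P=P_{\mathrm{QNN}}$.

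\textbf{Step 4: contrast with the barren plateau mechanism.} In Equation~\ref{eq:jaws_bp}, the variance is taken over $\theta$ of $\ell(\rho;\theta)=\mathrm{Tr}(U_\theta\rho U_\theta^\dagger O)$. Its decay comes from the Haar-like (Jordan-algebra) averaging over $U_\theta$, and is controlled by $\dim_{\mathbb{R}}\mathrm{Aut}(\mathcal{A}_\alpha)$, which scales with the Hilbert-space dimension $2^n$. The map $\phi\mapsto\pi_\phi$ carries no such unitary-group structure, so no analogue of this dimension factor appears in $\mathrm{Var}_\phi[\partial_{\phi_j}\mathcal{L}_{\mathrm{RL}}]$.

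\textbf{Step 5: bound the classical gradients.} For the MLP factors I would invoke standard classical facts. Under the spectral normalization of \cite{gogianu2021spectral}, each layer is $1$-Lipschitz. Hence $\|\nabla_\phi\log\pi_\phi\|$ and $\|\nabla_\phi V_\phi\|$ are bounded above, and under standard initialization they have variance bounded below in terms of the width and depth only. The decisive point is that these bounds involve no factor of $2^n$.

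\textbf{Main obstacle.} The hard step is the indirect coupling through the data. The scalar rewards $r_t$ and the loss entry of $s_t$ are QNN outputs, so in a barren plateau they may themselves concentrate around constants. The advantages $\hat{A}_t$ could then be tiny, which would damp the policy-gradient term. I would handle this in three ways.
\begin{itemize}
\item PPO's per-batch advantage standardization, $\hat{A}_t\mapsto(\hat{A}_t-\mu)/\sigma$, removes the overall scale of $\hat{A}_t$.
\item The value-loss and entropy terms do not depend on $\hat{A}_t$.
\item The state still contains $\theta_t$ and $g_{t-1}$, which the agent itself controls and which do not concentrate.
\end{itemize}
I would state this caveat explicitly. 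The theorem asserts the absence of the \emph{exponential-in-$n$ variance decay mechanism} in $\nabla_\phi\mathcal{L}_{\mathrm{RL}}$. It does not assert that the RL signal is informative when every reward is constant; that residual issue is a classical credit-assignment question, not a barren plateau.
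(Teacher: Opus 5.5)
Your proposal is correct and follows the same route as the paper's proof. The paper likewise expands $\nabla_\phi\mathcal{L}_{\mathrm{RL}}$ by the chain rule through the actor MLP and concludes that its variance depends on $d$ and $L_{\mathrm{MLP}}$ rather than on $2^n$, since nothing is differentiated through $U(\theta)$. Your explicit detaching of the rollout data and your score-function terms $\hat{A}_t\nabla_\phi\log\pi_\phi(g_t\mid s_t)$ are, if anything, a more faithful rendering of PPO than the paper's pathwise $\sum_t \frac{\partial\mathcal{L}_{\mathrm{RL}}}{\partial g_t}\frac{\partial g_t}{\partial\phi}$.

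One correction to your closing caveat. The reward concentration you describe is exactly the limitation the paper records in Proposition~\ref{prop:reward_info}. Under Definition~\ref{def:barren_plateau}, and under the hypothesis of Theorem~\ref{thm:bypass}, which bounds $\mathrm{Var}_\theta[\ell]$ itself, this concentration \emph{is} a barren-plateau effect rather than a purely classical credit-assignment issue. Advantage standardization rescales it but does not restore the lost information, so keep the theorem's scope strictly structural, as you otherwise do.
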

Proof for this theorem can be found at Appendix~\ref{proof:thm:agent_grad}. Taken together, Theorems~\ref{thm:bypass} and~\ref{thm:agent_grad} formalize a key advantage of RLQ-Grad: it decouples the optimization signal from the quantum circuit’s intrinsic geometry while retaining a fully trainable classical learning mechanism. By construction, the surrogate gradient $g_t$ is not constrained by the variance collapse characterized in Equation~\ref{eq:jaws_bp}, allowing meaningful updates even in regimes where standard gradient-based methods fail. At the same time, the learning dynamics of the policy $\pi_\phi$ remain governed by well-behaved classical optimization, ensuring that no analogous decay emerges on the agent side. Other properties regarding training conditions and environment setting effects can be found at Appendix~\ref{apd:trainability} and \ref{thm:reward_affect}.

\subsubsection{Resource advantages}\label{sec:resource}

\begin{figure*}[t!]
    \centering
    \includegraphics[width=\textwidth]{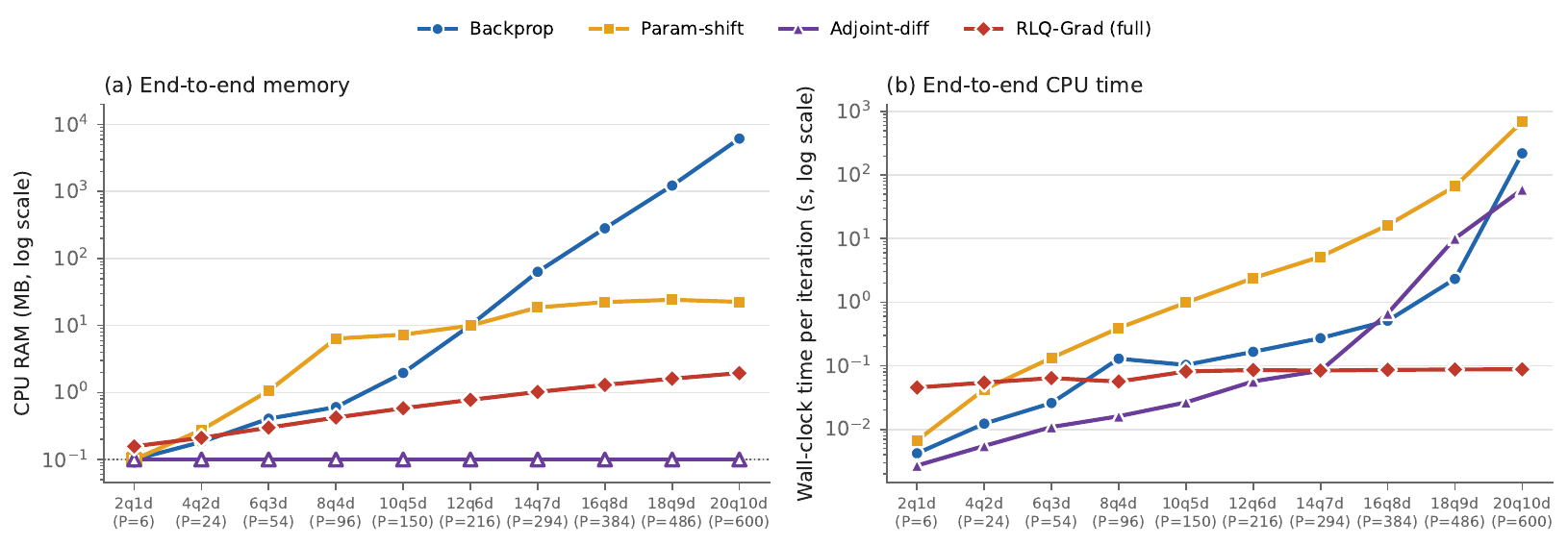}
    \caption{End-to-end memory (a, CPU RAM) and per-iteration CPU wall-clock time (b) of QNN gradient computation methods from 2 to 20 qubits (log scale, mean over 12 trials after 4 warm-up iterations). RLQ-Grad costs cover the full training pipeline: PPO rollout, actor and critic updates, gradient buffers, and Adam optimizer states. Hollow markers denote measurements below the 0.1 MB resolution. Exact values are listed in Appendix~\ref{apd:resource_tables}.}
    \label{fig:membench}
    \vspace{-0.7\intextsep}
\end{figure*}

\begin{wraptable}{l}{0.6\textwidth}
\vspace{-\intextsep}
\centering
\caption{Theoretical complexity of gradient computation strategies for a depth-$L$, $n$-qubit QNN with $P_{\mathrm{QNN}}$ trainable parameters. Forward pass complexity is identical for all methods and therefore omitted. For an MLP agent of depth $L_{\mathrm{MLP}}$ and width $d$, $P_{\mathrm{RL}} = \mathcal{O}(L_{\mathrm{MLP}} d^2 + d\,P_{\mathrm{QNN}})$ counts actor and critic parameters.}
\resizebox{\linewidth}{!}{
\begin{tabular}{@{}lcc@{}}
\toprule
\bf Method & \bf \shortstack{Time Complexity\\$\mathcal{O}_{\mathrm{time}}$} & \bf \shortstack{Memory Complexity\\$\mathcal{O}_{\mathrm{memory}}$} \\
\midrule
Backpropagation~\cite{bergholm2018pennylane} & $L \cdot 2^n$                       & $L \cdot 2^n$ \\
Parameter-shift~\cite{mitarai2018quantum} & $P_{\mathrm{QNN}} \cdot L \cdot 2^n$ & $P_{\mathrm{QNN}} \cdot2^n$ \\
Adjoint differentiation~\cite{jones2020efficient} & $P_{\mathrm{QNN}} \cdot 2^n$ & $1$\\
\midrule
\bf RLQ-Grad (actor inference) & $L_{\mathrm{MLP}} d^2 + d\,P_{\mathrm{QNN}}$ & $d + P_{\mathrm{QNN}}$ \\
\bf RLQ-Grad (full pipeline) & $P_{\mathrm{RL}}$ & $4\,P_{\mathrm{RL}}$ \\
\bottomrule
\end{tabular}}
\label{tab:complexity}
\vspace{-0.7\intextsep}
\end{wraptable}
 
To quantify how our method performs with respect to (w.r.t.) well-known gradient-based methods, Table~\ref{tab:complexity} summarises the asymptotic cost of each gradient computation scheme of Figure~\ref{fig:plateau_mea}. Backpropagation stores all $L$ intermediate statevectors, incurring $\mathcal{O}(L \cdot 2^n)$ in both time and memory. Adjoint differentiation costs $\mathcal{O}_{\mathrm{time}}(P_{\mathrm{QNN}} \cdot 2^n)$ but reconstructs intermediate states on-the-fly via unitary inversion, so beyond the statevector already held by the forward pass it needs only a constant number of buffers, i.e.\ $\mathcal{O}_{\mathrm{memory}}(1)$ additional memory. The parameter-shift rule evaluates two full forward passes per parameter, giving $\mathcal{O}_{\mathrm{time}}(P_{\mathrm{QNN}} \cdot L \cdot 2^n)$ at $\mathcal{O}_{\mathrm{memory}}(P_{\mathrm{QNN}} \cdot 2^n)$, making it the least time/memory-efficient of the three for deep or wide circuits.

RLQ-Grad replaces explicit gradient propagation through the quantum circuit with a learned surrogate estimator. Since the agent operates entirely in the classical domain, its cost depends on its own parameter count $P_{\mathrm{RL}}$, not on the Hilbert space dimension $2^n$. Because both the state $s_t$ and the action $g_t$ have dimension $\mathcal{O}(P_{\mathrm{QNN}})$, only the input and output layers of the MLP depend on the circuit size, while the hidden layers keep a fixed width $d$. Proposing a gradient (actor inference) therefore costs $\mathcal{O}(L_{\mathrm{MLP}} d^2 + d\,P_{\mathrm{QNN}})$ time. A complete training step must also hold the critic, the gradient buffers, and the two Adam moment estimates, so its memory is $\approx 4P_{\mathrm{RL}}$. Both costs grow linearly in $P_{\mathrm{QNN}}$, hence polynomially in $n$, and are independent of $2^n$, shifting the computational burden from quantum simulation to classical function approximation. Full derivations are given in Appendix~\ref{apd:complexity}.

Figure~\ref{fig:membench} reports the measured end-to-end cost for $n=2$ to $20$. Backpropagation grows from $0.0042$s to $219.41$s per iteration, following its $\mathcal{O}_{\mathrm{time}}(L \cdot 2^n)$ complexity; adjoint differentiation grows from $0.0027$s to $59.32$s, and parameter-shift, which requires $2P_{\mathrm{QNN}}$ circuit evaluations, from $0.0067$s to $693.96$s. The full RLQ-Grad pipeline, roughly $50\times$ the cost of an actor-only forward pass, instead stays between $0.0455$s and $0.0881$s across all configurations. Its fixed overhead makes it slower than analytic differentiation on the smallest circuits, but it overtakes parameter-shift from 6q3d, backpropagation from 8q4d, and adjoint differentiation from 16q8d (the two are tied at 14q7d). At $n=20$, RLQ-Grad is $\sim\!2490\times$, $\sim\!7876\times$, and $\sim\!673\times$ faster than backpropagation, parameter-shift, and adjoint differentiation, respectively. RLQ-Grad also runs natively on GPU, whereas parameter-shift and adjoint differentiation lack efficient GPU kernels in current QNN frameworks; on an RTX 3090 it is faster than GPU backpropagation at every scale ($1.03\times$ to $1.22\times$, Appendix~\ref{apd:resource_tables}).

Memory consumption further delineates the scalability of these methods. Backpropagation grows exponentially from below $0.1$ MB to $6174$ MB at $n=20$, while parameter-shift saturates around $20$ MB. Adjoint differentiation stays below $0.1$ MB at every scale, consistent with its $\mathcal{O}(1)$ additional memory, and remains the most memory-efficient method. The full RLQ-Grad pipeline grows from $161.22$ KB to $1.95$ MB, i.e.\ by $\approx 3$ KB per QNN parameter, confirming the predicted linear dependence on $P_{\mathrm{QNN}}$. This footprint decomposes into actor and critic parameters ($\approx 26\%$), gradient buffers ($\approx 26\%$), and Adam optimizer states ($\approx 52\%$); at 12q6d, for example, the total of $796.14$ KB is roughly $4\times$ the actor-only $102.28$ KB. Even under this full accounting, RLQ-Grad scales strictly better than backpropagation and parameter-shift, using $\sim\!3166\times$ and $\sim\!12\times$ less memory at $n=20$.

\subsection{Benchmarking against well-known quantum optimization algorithms}\label{sec:main_res}

\begin{table*}[t!]
    \caption{Comparison of top 1 validation accuracy (in percentage, averaged over 4 runs) of gradient computation strategies for a $n$-qubit, $L$-depth HEA. Best performance per variant is denoted in bold and second best is underlined.}
    \centering
    \resizebox{\textwidth}{!}{
        \begin{tabular}{@{}ll*{6}{c}@{}}
        \toprule
        \multirow{3}{*}{\bf Datasets} & \multirow{3}{*}{\bf Algorithms} & \multicolumn{6}{c}{\bf Top 1 validation accuracy per configuration}\\ 
        \cmidrule{3-8}
        & & \shortstack{2q1d\\(Param = 6)} & \shortstack{4q2d\\(Param = 24)} & \shortstack{6q3d\\(Param = 54)} & \shortstack{8q4d\\(Param = 96)} & \shortstack{10q5d\\(Param = 150)} & \shortstack{12q6d\\(Param = 216)}\\
        \midrule
        \multirow{4}{*}{\shortstack{BC}} & Backpropagation~\cite{bergholm2018pennylane} & $80.62\pm10.64$ & $\underline{90.86\pm1.998}$ & $\underline{91.88\pm2.057}$ & $\underline{91.88\pm1.126}$ & $91.96\pm2.402$ & $\underline{93.15\pm2.942}$\\
        & Parameter-shift~\cite{mitarai2018quantum} & $\underline{85.71\pm7.082}$ & $89.77\pm1.316$ & $89.60\pm5.526$ & $89.77\pm2.010$ & $91.99\pm5.385$ & $92.11\pm5.526$\\
        & Adjoint-differentiation~\cite{jones2020efficient} & $83.01\pm11.64$ & $84.93\pm14.82$ & $85.04\pm3.689$ & $87.32\pm8.476$ & $\underline{92.75\pm10.28}$ & $93.02\pm8.926$\\
        & \bf RLQ-Grad (Ours) &  $\mathbf{95.52\pm0.080}$ &  $\mathbf{95.20\pm2.379}$ &  $\mathbf{95.79\pm3.418}$ &  $\mathbf{95.96\pm0.380}$ &  $\mathbf{96.18\pm4.178}$ &  $\mathbf{96.58\pm5.612}$\\
        \midrule
        \multirow{4}{*}{\shortstack{MNIST}} & Backpropagation~\cite{bergholm2018pennylane} & $\underline{59.24\pm3.171}$ & $78.52\pm1.371$ & $\underline{88.81\pm0.434}$ & $\underline{91.43\pm0.624}$ & $91.74\pm0.305$ & $\underline{92.48\pm0.225}$\\
        & Parameter-shift~\cite{mitarai2018quantum} & $56.25\pm0.419$  & $\underline{80.59\pm1.445}$ & $85.35\pm0.447$ & $87.14\pm0.406$ & $88.43\pm0.181$ & $88.52\pm0.279$\\
        & Adjoint-differentiation~\cite{jones2020efficient} &  $58.16\pm2.086$ & $79.11\pm1.902$ & $87.42\pm0.613$ & $90.85\pm0.771$ & $\underline{92.19\pm0.151}$ & $91.96\pm0.412$\\
        & \bf RLQ-Grad (Ours) & $\mathbf{61.36\pm2.216}$ & $\mathbf{83.12\pm0.375}$  & $\mathbf{90.76\pm0.462}$ & $\mathbf{91.78\pm0.218}$ & $\mathbf{93.63\pm0.114}$ & $\mathbf{95.27\pm0.297}$ \\
        \midrule
        \multirow{4}{*}{\shortstack{F-MNIST}} & Backpropagation~\cite{bergholm2018pennylane} & $\underline{72.87\pm0.312}$& $74.77\pm0.573$ & $\underline{82.34\pm0.441}$ & $81.36\pm0.407$ & $84.62\pm0.271$ & $\underline{85.72\pm0.198}$\\
        & Parameter-shift~\cite{mitarai2018quantum} & $68.84\pm0.154$& $72.63\pm0.601$ & $79.75\pm0.518$ & $\underline{84.12\pm0.336}$ & $82.41\pm0.295$ & $82.98\pm0.266$\\
        & Adjoint-differentiation~\cite{jones2020efficient} & $73.23\pm0.576$& $\underline{75.96\pm0.528}$ & $81.28\pm0.462$ & $83.54\pm0.388$ & $\underline{85.08\pm0.221}$ & $85.13\pm0.243$\\
        & \bf RLQ-Grad (Ours) & $\mathbf{73.64\pm0.415}$ & $\mathbf{76.43\pm0.442}$ & $\mathbf{83.15\pm0.396}$ & $\mathbf{85.87\pm0.285}$ & $\mathbf{86.94\pm0.204}$ & $\mathbf{87.62\pm0.181}$\\
        \midrule
        \multirow{4}{*}{\shortstack{CIFAR10}} & Backpropagation~\cite{bergholm2018pennylane} & $\underline{24.69\pm1.429}$ & $\underline{33.68\pm0.990}$  & $35.02\pm0.884$ & $\underline{39.78\pm0.461}$  & $\underline{40.36\pm0.375}$ & $38.05\pm0.601$\\
        & Parameter-shift~\cite{mitarai2018quantum} & $26.34\pm1.585$ & $32.11\pm1.102$ & $\underline{37.34\pm0.475}$ & $37.45\pm0.722$ & $38.12\pm0.653$ & $\underline{40.08\pm0.321}$\\
        & Adjoint-differentiation~\cite{jones2020efficient} & $25.06\pm0.632$ & $31.87\pm0.955$ & $34.76\pm0.801$ & $36.98\pm0.690$ & $37.65\pm0.588$ & $37.52\pm0.544$\\
        & \bf RLQ-Grad (Ours) & $\mathbf{27.17\pm0.782}$ & $\mathbf{34.84\pm0.744}$ & $\mathbf{37.91\pm0.533}$ & $\mathbf{40.72\pm0.428}$ & $\mathbf{41.63\pm0.367}$ & $\mathbf{41.15\pm0.352}$\\
        \bottomrule
        \end{tabular}
    }
    \label{tab:cross-dataset-eval}
    \vspace{-0.8\intextsep}
\end{table*}

Table~\ref{tab:cross-dataset-eval} and Figure~\ref{fig:acc_main} (Appendix~\ref{apd:acc_curves}) present a comprehensive comparison of gradient computation strategies across multiple datasets for 200 epochs, including: BreastCancer (BC)~\cite{breast_cancer_14}, Fashion-MNIST (F-MNIST)~\cite{xiao2017fashion}, MNIST~\cite{deng2012mnist} and CIFAR10~\cite{krizhevsky2009learning}, along with HEA configurations across gradient computation methods. Except for model trained on BC dataset with use a batch size of 128 due to small sample size, all dataset used a batch size of 256. We also use a fixed random seed sampler for dataset without predetermined split such as BC, and reduce input dimensionality with PCA (Appendix~\ref{apd:pca}).

Overall, RLQ-Grad consistently achieves the best performance across all datasets and configurations, demonstrating both strong accuracy and scalability as the number of qubits and circuit depth increase. Across all datasets, accuracy generally improves as the circuit size increases (from $2q1d$ to $12q6d$), indicating that higher parameter counts enhance representational capacity. Conventional methods such as backpropagation, parameter-shift, and adjoint differentiation exhibit diminishing returns or instability at higher configurations, whereas RLQ-Grad maintains steady improvements.

On BC, RLQ-Grad outperforms all baselines at every configuration, reaching up to $96.58\%$ with a margin of $+2\%$ to $+10\%$ over the second-best method. On MNIST, the gap widens with circuit size, and RLQ-Grad surpasses the competitive backpropagation and adjoint differentiation by up to $\sim\!3\%$ at 12q6d. On F-MNIST, RLQ-Grad again achieves the highest accuracy, reaching $87.62\%$. On CIFAR10, overall accuracies are lower due to dataset complexity, but RLQ-Grad still gives the best results at every configuration, with more modest ($\sim\!2\%$ to $4\%$) but consistent gains.

Backpropagation generally performs well and often ranks second, especially in simpler datasets (BC, MNIST). Parameter-shift shows competitive performance at smaller scales but struggles to scale efficiently. Adjoint differentiation exhibits higher variance and inconsistent ranking, particularly in deeper circuits. In contrast, RLQ-Grad not only achieves the highest accuracy but also maintains relatively stable variance across runs. A key advantage of RLQ-Grad is its scalability, where performance continues to improve with increasing circuit size without the degradation or saturation observed in other methods. Additionally, its relatively low standard deviation (especially on MNIST, F-MNIST, and CIFAR10) indicates stable training dynamics.

\subsection{Scaling to 20 qubits and comparison with stronger baselines}\label{sec:scale}

To probe the regime where barren plateaus become acute and where the action space of the agent is largest, we extend CIFAR-10 to $n = 14, 16, 18, 20$ qubits ($P_{\mathrm{QNN}} = 294, 384, 486, 600$) and add two families of baselines: evolutionary and gradient-free optimizers (CMA-ES~\cite{hansen2016cma}, sep-CMA-ES~\cite{omidvar2010comparative}, and SPSA~\cite{spall1998implementation}), and QNN-specific barren plateau mitigation (layerwise learning~\cite{skolik2021layerwise} and Gaussian initialization~\cite{zhang2022escaping}). Results are reported in Table~\ref{tab:scale}.

\begin{table*}[t!]
    \caption{CIFAR-10 top-1 validation accuracy (in percentage, mean $\pm$ std over 3 runs) at larger circuit scales. Best performance per configuration is denoted in bold and second best is underlined.}
    \centering
    \resizebox{0.9\linewidth}{!}{
        \begin{tabular}{@{}llcccc@{}}
        \toprule
        \bf \multirow{2}{*}{Type} & \bf \multirow{2}{*}{Method} & \multicolumn{4}{c}{\bf Top 1 validation accuracy per configuration}\\ 
        \cmidrule{3-6}
        & & \shortstack{14q7d\\(Param = 294)} & \shortstack{16q8d\\(Param = 384)} & \shortstack{18q9d\\(Param = 486)} & \shortstack{20q10d\\(Param = 600)}\\
        \midrule
        Gradient-based & Backpropagation~\cite{bergholm2018pennylane} & $38.72\pm3.21$ & $40.45\pm2.67$ & $39.86\pm3.48$ & $40.13\pm2.91$\\
        \midrule
        \multirow{3}{*}{Gradient-free} & CMA-ES~\cite{hansen2016cma} & $10.10\pm0.14$ & $10.12\pm0.21$ & $10.11\pm0.20$ & $10.13\pm0.15$\\
        & sep-CMA-ES~\cite{omidvar2010comparative} & $10.09\pm0.23$ & $10.14\pm0.17$ & $10.11\pm0.26$ & $10.15\pm0.19$\\
        & SPSA~\cite{spall1998implementation} & $10.08\pm0.27$ & $10.15\pm0.18$ & $10.11\pm0.32$ & $10.14\pm0.24$\\
        \midrule
        \multirow{2}{*}{BP mitigation} & Layerwise~\cite{skolik2021layerwise} & $44.73\pm3.21$ & $\underline{48.58\pm2.74}$ & $\underline{50.92\pm3.47}$ & $\mathbf{53.16\pm3.65}$\\
        & Gaussian Init~\cite{zhang2022escaping} & $43.92\pm3.15$ & $46.87\pm2.38$ & $49.76\pm3.02$ & $52.41\pm2.71$\\
        \midrule
        \textbf{Learnable (ours)} & \bf RLQ-Grad & $\mathbf{45.86\pm3.84}$ & $\mathbf{48.64\pm2.46}$ & $\mathbf{51.29\pm3.16}$ & $\underline{53.02\pm2.83}$\\
        \bottomrule
        \end{tabular}
    }
    \label{tab:scale}
    \vspace{-0.8\intextsep}
\end{table*}

\subsubsection{Action-space scalability and non-stationarity}

RLQ-Grad remains stable and improves monotonically up to $P_{\mathrm{QNN}} = 600$. The action dimensionality enters the agent only through its input and output layers, while the hidden layers keep a fixed width, so the agent grows linearly and its training difficulty grows sub-linearly in $P_{\mathrm{QNN}}$. The environment is non-stationary because the QNN changes under the agent's own actions; PPO with SN handles this through (i) on-policy rollouts, which avoid off-policy staleness, (ii) the clipped surrogate objective, which bounds the divergence between successive policies, and (iii) SN on the critic, which bounds the propagation of TD errors (Appendix~\ref{sec:sn_norm_vf}). The standard deviations of $2.46$ to $3.84$ in Table~\ref{tab:scale}, comparable to those of backpropagation and the mitigation baselines, confirm that training remains stable at this scale.

\subsubsection{Why reinforcement learning rather than evolutionary search}

CMA-ES, sep-CMA-ES, and SPSA collapse to chance ($\sim\!10\%$) at every scale. This is consistent with \cite{arrasmith2021effect}: gradient-free optimizers cannot escape the exponential shot-budget scaling of high-dimensional barren plateaus, and the $\mathcal{O}(P_{\mathrm{QNN}}^2)$ covariance estimation of CMA-ES becomes prohibitive for $P_{\mathrm{QNN}} \geq 300$. RLQ-Grad avoids both issues because its amortized policy learns a state-conditional update rule instead of searching over a covariance model. Layerwise learning and Gaussian initialization reach accuracy comparable to RLQ-Grad, but they remain backpropagation variants that inherit its $\mathcal{O}(L \cdot 2^n)$ time and memory (Figure~\ref{fig:membench}). RLQ-Grad matches their accuracy while retaining its computational advantages, which positions it as complementary to landscape-shaping approaches rather than as a replacement.

\subsubsection{Are the benchmarks in the barren plateau limited regime?}

The range $n = 2$ to $12$ of Table~\ref{tab:cross-dataset-eval} is best characterized as an \emph{early-onset} rather than a \emph{fully saturated} barren plateau regime. Figure~\ref{fig:plateau_mea} shows that gradient variance already decays with slopes of $-0.85$ (backpropagation), $-0.30$ (parameter-shift), and $-0.14$ (adjoint differentiation) over this range, versus $\approx 0$ for RLQ-Grad, but the decay has not yet dominated final accuracy; this is expected, since barren plateaus become acute at $n \gtrsim 20$ for global observables in generic HEAs~\cite{larocca2025barren}. The extended experiments show the pattern predicted if barren plateaus are the operative bottleneck. Backpropagation gains only $\sim\!2$ pp from 12q6d to 20q10d ($38.05\% \to 40.13\%$) although $P_{\mathrm{QNN}}$ nearly triples ($216 \to 600$), compared with $\sim\!13$ pp from 2q1d to 12q6d ($24.69\% \to 38.05\%$); its rate of improvement collapses where the barren plateau is expected to activate. Over the same range, layerwise learning and Gaussian initialization gain $\sim\!8.5$ pp, and RLQ-Grad follows the same trajectory ($45.86\% \to 53.02\%$).

These observations also separate learned-optimizer dynamics from barren plateau mitigation. If RLQ-Grad won through its optimizer dynamics alone, it should outperform mitigation methods that lack such dynamics; instead, the three methods stay within $2$ pp of each other from 14q7d to 20q10d, which suggests that all three succeed through the same mechanism, namely avoiding the variance collapse. Conversely, if the advantage were purely a matter of avoiding circuit gradients, any gradient-free method would succeed, which CMA-ES and SPSA rule out. Gradient-freeness (Theorem~\ref{thm:bypass}) is therefore necessary but not sufficient, and the contribution of RLQ-Grad is to combine it with amortized policy learning that can handle a high-dimensional action space.

\section{Limitations and discussions}

Our analysis identifies three concrete regimes in which RLQ-Grad does not offer the advantage suggested by Theorems \ref{thm:bypass} and \ref{thm:agent_grad}, and two practical caveats that temper the scope of the reported gains.

\paragraph{Reward concentration in deep barren-plateau regimes.} Theorem~\ref{thm:bypass} establishes that the surrogate gradient $g_t$ is not constrained by the Jordan-algebraic variance bound because $g_t$ is emitted by a classical network rather than computed by differentiating through $U(\theta)$. However, as we show in Proposition~\ref{prop:reward_info} (Appendix~\ref{thm:reward_affect}), the reward $r_t = \mathrm{acc}_t + (\ell_t + \epsilon)^{-1}$ is itself a function of the loss $\ell_t$, and Wishart-process description implies that $\ell_t$ concentrates exponentially whenever the circuit is on a barren plateau~\cite{anschuetz2024unified}. In that regime the reward differences used to drive policy gradients vanish at the same rate as conventional cost-function differences, echoing the result of \cite{arrasmith2021effect} that no gradient-free optimizer (including surrogate-gradient models) can escape the exponential shot-budget scaling of a barren plateau. RLQ-Grad therefore delivers a genuine landscape advantage only when the Jordan algebra $\mathcal{A}$ leaves the cost non-flat (Condition 1 of Theorem~\ref{thm:trainability}); when the cost is truly exponentially flat, RLQ-Grad retains only its computational advantage.

\paragraph{Poor local minima are not bypassed.} Theorem~\ref{thm:local_minima} (Appendix~\ref{apd:local_min_lim}) shows that, independent of gradient quality, the density of good local minima is exponentially suppressed whenever the overparameterization condition $p \ge \max_\alpha \beta_\alpha r_\alpha$ fails. Because RLQ-Grad still converges to a local minimum of the empirical loss, it inherits this obstruction. Figure~\ref{fig:ll} makes this visible: on the underparameterized F-MNIST configuration the loss surface fragments into sharp basins for every method, RLQ-Grad included.

\paragraph{Input-gradient decay propagates to classical pre-processing.} The hybrid architecture we use funnels classical features through an FC layer $\mathrm{FC}_{\text{in}}$ before the $R_x$ encoding. Theorem \ref{thm:ibp} and Corollary \ref{col:hybrid} (Appendix \ref{apd:input_grad}) show that whenever $U(\theta)$ forms (at least) a 2-design, the input-space Jacobian $\mathbb{E}_\theta[\|\nabla_x \mathcal{L}\|^2] \le C n / 2^n$ collapses exponentially; all classical layers upstream of the quantum encoding therefore receive vanishing gradients.

\paragraph{Empirical scope.} All experiments are conducted on statevector simulators with shot-noise-free measurement: we do not report results on physical hardware, under realistic noise channels, or on non-classification tasks (VQE, QAOA, generative modeling). The four-dataset benchmark of Table~\ref{tab:cross-dataset-eval} spans $n = 2$ to $12$, an early-onset rather than fully saturated barren plateau regime; only CIFAR-10 and the resource benchmarks of Figure~\ref{fig:membench} are extended to $n = 20$, which is where the barren plateau starts to become acute for global observables in generic HEAs~\cite{larocca2025barren}. At that scale RLQ-Grad matches, but does not exceed, dedicated mitigation methods such as layerwise learning (Table~\ref{tab:scale}), so its advantage over them is computational rather than in accuracy. Finally, the fixed per-iteration overhead of the full RLQ-Grad pipeline makes it slower than analytic differentiation on small circuits (below 6 to 8 qubits on CPU), and adjoint differentiation remains more memory-efficient at every scale we tested.

\paragraph{Transferability and agent cost.} The agents reported in Tables~\ref{tab:cross-dataset-eval} and \ref{tab:scale} are trained per configuration: a new agent is required for every choice of $(n, L)$ and dataset. The cost of training the agent alongside the QNN (rollouts, actor and critic updates, optimizer states) is included in Figure~\ref{fig:membench}, but it is not amortized across configurations. A learned-optimizer style meta-training regime that amortizes the agent across circuit scales and tasks (in the spirit of \cite{andrychowicz2016learning} and \cite{verdon2019learning}) is a natural next step but is outside the present scope.

\section{Conclusion}

We propose RLQ-Grad, a reinforcement-learning optimizer that replaces explicit differentiation of parameterized quantum circuits with a classical policy that directly proposes parameter updates. The key insight is that barren plateaus arise from the distribution of $\partial_\mu \ell$ induced by $U(\theta)$; by sampling updates from a policy $\pi_\phi$, RLQ-Grad bypasses this distribution entirely, avoiding the associated exponential variance collapse. Since $\pi_\phi$ is a finite-width MLP, its training follows standard classical concentration behavior independent of the number of qubits $n$, preventing analogous pathologies.

Trained with PPO and spectral normalization, RLQ-Grad achieves three main outcomes: (i) near-constant gradient variance (slope $\approx 0$) across $n=2$ to $12$, compared to decay rates of $-0.85$, $-0.30$, and $-0.14$ for backpropagation, parameter-shift, and adjoint differentiation; (ii) end-to-end time and memory that grow only linearly in the number of QNN parameters and are independent of the Hilbert space dimension, yielding speedups of up to $7876\times$ and under $2$ MB of memory at $n=20$ once the full agent training pipeline is accounted for; and (iii) accuracy gains of $2\%$ to $10\%$ over gradient-based baselines across four supervised benchmarks, together with parity with dedicated barren plateau mitigation methods on CIFAR-10 up to $20$ qubits, where evolutionary and gradient-free optimizers fail.

These results suggest that decoupling the optimization signal from quantum circuit geometry is a viable and complementary strategy to landscape-shaping approaches such as local costs, layerwise training, or structured initialization, matching their accuracy at a fraction of their simulation cost. However, RLQ-Grad does not address loss concentration, underparameterization, or gradient collapse in classical preprocessing layers. Future work includes meta-training across circuit scales, incorporating locality-aware rewards, extending to generative and variational tasks, and evaluating performance on noisy quantum hardware.

\begin{ack}
This research was financially sponsored in part by VinUniversity, Nhan Trong Luu was co-sponsored by both Can Tho University and VinUniversity.
\end{ack}

\bibliographystyle{ieeetr}
\bibliography{refs}

\appendix
\section{Universal Approximation Capability of HEA}\label{apd:uat}
 
The expressive power of the HEA circuit adopted in this work is grounded in the
Universal Approximation Theorem (UAT)~\cite{hornik1991approximation}, which we
restate for completeness.
 
\begin{theorem}[\cite{hornik1991approximation}]
Let $I_m = [0,1]^m$ and let $\varphi : \mathbb{R} \rightarrow \mathbb{R}$ be any
nonconstant, bounded, and continuous function. Then for every $f \in \mathcal{C}(I_m)$
and every $\varepsilon > 0$, there exist $N \in \mathbb{N}$, $\alpha_i, b_i \in \mathbb{R}$,
and $\vec{w}_i \in \mathbb{R}^m$ such that the single-hidden-layer network
\begin{equation}
    h(\vec{x}) = \sum_{i=1}^{N} \alpha_i\,\varphi\!\left(\vec{w}_i \cdot \vec{x} + b_i\right)
\end{equation}
satisfies $\sup_{\vec{x} \in I_m} |h(\vec{x}) - f(\vec{x})| < \varepsilon$.
\end{theorem}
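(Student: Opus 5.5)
The plan is to prove the statement through the classical Stone--Weierstrass route, which is the standard way Hornik-type results are obtained, and then upgrade from an algebra of functions to the linear span of ridge functions. First, fix a nonconstant, bounded, continuous $\varphi$ and consider the linear space $\Sigma(\varphi)=\mathrm{span}\{\vec{x}\mapsto \varphi(\vec{w}\cdot\vec{x}+b)\}$. The target is to show $\Sigma(\varphi)$ is dense in $\mathcal{C}(I_m)$ under the sup norm. The argument proceeds in two stages.

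\textbf{Stage one: reduce to one dimension.} The span of ridge functions $g(\vec{w}\cdot\vec{x})$, with $g$ ranging over $\mathcal{C}(\mathbb{R})$ and $\vec{w}$ over $\mathbb{R}^m$, is dense in $\mathcal{C}(I_m)$. To see this, take $g=\exp$. The span of $\{e^{\vec{w}\cdot\vec{x}}\}$ is closed under multiplication, since $e^{\vec{w}\cdot\vec{x}}e^{\vec{v}\cdot\vec{x}}=e^{(\vec{w}+\vec{v})\cdot\vec{x}}$. It contains constants (take $\vec{w}=0$) and separates points of the compact set $I_m$. Stone--Weierstrass therefore gives density. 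It thus suffices to show that every continuous $g$ on a compact interval $[-R,R]$ can be uniformly approximated by finite sums $\sum_i\alpha_i\varphi(a_i t+b_i)$. Composing with $t=\vec{w}\cdot\vec{x}$, which maps $I_m$ into a bounded interval, then finishes the argument.

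\textbf{Stage two: one-dimensional density (the main obstacle).} Here we only have boundedness, continuity, and nonconstancy. I would first try a Hahn--Banach/Riesz argument. Suppose the closure of the one-dimensional span is not all of $\mathcal{C}[-R,R]$. Then there is a nonzero signed Radon measure $\mu$ annihilating every $\varphi(at+b)$, and since boundedness lets us pass limits through dominated convergence, also every uniform-on-compacts limit of such functions. Because $\varphi$ is nonconstant and continuous, convolving with a mollifier $\psi\in C_c^\infty$ gives $\varphi*\psi$. This is a smooth function that is a uniform limit of Riemann sums of translates of $\varphi$, so it lies in the closure. We can choose $\psi$ so that $\varphi*\psi$ remains nonconstant. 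For smooth nonconstant $h=\varphi*\psi$, pick $b_0$ with $h^{(k)}(b_0)\neq0$ for each $k$. Then the derivatives $\partial_a^k h(at+b_0)\big|_{a=0}=t^k h^{(k)}(b_0)$ are limits of difference quotients of functions in the closure, so all monomials $t^k$ lie in the closure. Weierstrass then gives density, contradicting $\mu\neq0$.

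The delicate point is guaranteeing a $b_0$ where every derivative of $h$ is nonzero, i.e.\ that $h$ is not a polynomial. A bounded nonconstant function cannot be a polynomial, and the mollified $h$ stays bounded. By a Baire category argument, the sets $\{b: h^{(k)}(b)=0\}$ cannot all have interior unless $h$ is locally polynomial, hence globally a polynomial, which is impossible. So some $b_0$ works, possibly after adjusting it within an open set. This is the step I expect to require the most care; the rest is bookkeeping combining the two stages with the triangle inequality to get $\sup_{I_m}|h-f|<\varepsilon$.
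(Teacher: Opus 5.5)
The paper gives no proof of this statement. It restates Hornik's theorem "for completeness" and defers to the citation. Your route also differs from Hornik's own argument. Hornik argues by duality: a finite signed measure annihilating every $\varphi(\vec w\cdot\vec x+b)$ is shown to vanish by a Fourier-analytic argument that treats $\varphi$ as a tempered distribution. You instead use the elementary Leshno--Lin--Pinkus--Schocken scheme: Stone--Weierstrass on exponentials to reduce to one variable, mollification to get a smooth member $h=\varphi*\psi$ of the closure, finite differences in the dilation parameter to produce monomials, and then Weierstrass. This scheme works here. Its advantage is that boundedness is used only to guarantee that $h$ is not a polynomial, so the argument extends to any continuous non-polynomial activation. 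The Hahn--Banach/measure framing in Stage two is superfluous, since you put all monomials into the closure directly. The dominated-convergence remark is also unnecessary, because a bounded functional on $C[-R,R]$ automatically annihilates uniform limits.

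The step you flagged does not hold as written. Set $Z_k=\{b: h^{(k)}(b)=0\}$. Baire applied to $\mathbb{R}=\bigcup_k Z_k$ only gives that some $Z_k$ contains an interval. Applied on every subinterval, it gives that $h$ is a polynomial near each point of a dense open set $G$. It does not give that $h$ is ``hence globally a polynomial.'' Bounded nonconstant smooth functions are routinely polynomial on large intervals. For example, take the clipped ramp $\varphi(t)=\min\{1,\max\{0,t\}\}$ and $\psi$ supported in $[-\delta,\delta]$ with small $\delta$. Then $h$ is constant on $(-\infty,-\delta)$ and on $(1+\delta,\infty)$, and affine on $(\delta,1-\delta)$. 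A single $b_0$ with $h^{(k)}(b_0)\neq 0$ for all $k$ does exist (this is the Corominas--Sunyer Balaguer theorem), but proving it needs a second Baire argument on the closed set $\mathbb{R}\setminus G$, which you have not supplied.

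You do not need this step, because the point may depend on $k$. For each $k$, $h^{(k)}\not\equiv 0$: otherwise $h$ would be a polynomial, and a bounded polynomial is constant. Pick $b_k$ with $h^{(k)}(b_k)\neq 0$. The $k$-th finite differences of $a\mapsto h(at+b_k)$ at $a=0$ lie in the closure. They converge uniformly on $[-R,R]$ to $t^k h^{(k)}(b_k)$, by uniform continuity of $h^{(k)}$ on compacts. With this replacement your proof is complete.
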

 
PQCs admit an analogous universality result. This construction endows even single-qubit QNNs with universal approximation capability over compact domains~\cite{perez2020data, luu2026parameter}. The HEA used in this work inherits this property: with sufficient depth and data re-uploading, it can approximate arbitrary continuous classification functions to any prescribed precision.
 
\section{Quantum Barren Plateaus}\label{apd:barren_plateau}
 
Barren plateaus~\cite{larocca2025barren} refer to the exponential concentration of
PQC loss landscapes, wherein both output and gradient variances vanish exponentially in
the number of qubits $n$, rendering gradient-based optimization infeasible at scale.
 
\begin{definition}[\cite{larocca2025barren}]\label{def:barren_plateau}
Let $U(\theta)$ be a PQC acting on an $n$-qubit state $\rho$, and let
$f_{\theta}(\rho, O) = \mathrm{Tr}[O\,U(\theta)\rho U^{\dagger}(\theta)]$
for an observable $O$. The circuit is said to exhibit a \emph{barren plateau} over
base $b$ if
\begin{equation}
    \mathrm{Var}_{\theta}[f_{\theta}(\rho,O)]
    \;\;\text{or}\;\;
    \mathrm{Var}_{\theta}[\partial_{\mu}f_{\theta}(\rho,O)]
    \;\in\; \mathcal{O}\!\left(b^{-n}\right),
    \label{eq:prob_bp}
\end{equation}
where $b > 1$ depends on the choice of $\rho$ and $O$.
\end{definition}
 
The gradient variance bound in Equation \ref{eq:prob_bp} implies that the number of
circuit evaluations required to estimate $\partial_\mu f$ to fixed precision grows
exponentially in $n$, making training intractable on near-term hardware. This
concentration is mechanistically linked to the degree to which the parameter
distribution approximates a unitary $t$-design: deep HEA circuits approach a
$2$-design~\cite{park2024hardware}, which is sufficient to force
$\mathrm{Var}_\theta[\partial_\mu f] \in \mathcal{O}(2^{-n})$ for global
observables~\cite{larocca2025barren}. Mitigation strategies include the use of
local cost functions~\cite{cerezo2021cost}, structured
initialization~\cite{grant2019initialization}, layer-wise
training~\cite{skolik2021layerwise}, and hybrid architectures~\cite{luu2026parameter}.

\section{Proofs of main theorems}
\subsection{Proof of Theorem \ref{thm:bypass}}\label{proof:thm:bypass}
In standard gradient-based QNN training, the parameter update is:
\begin{equation}
    \theta_{t+1} = \theta_t - \eta\,\nabla_\theta\mathcal{L}(\theta_t),
\end{equation}
and the barren plateau renders $\nabla_\theta\mathcal{L}(\theta_t)\approx0$
exponentially in $n$, stalling optimisation\footnote{Noted in previous analysis of~\cite{mcclean2018barren}, where they characterised the barren plateau as $\mathrm{Var}_\theta[\partial_\mu\mathcal{L}] \in \mathcal{O}(2^{-n})$, arising from
approximate 2-design behaviour at large depth.}. In RLQ-Grad, the update is instead
\begin{equation}
    \theta_{t+1} = \theta_t - \eta\,g_t, \qquad g_t \sim \pi_\phi(\cdot\mid s_t).
\end{equation}
The quantity $g_t$ is the output of the policy network $\pi_\phi$, a purely classical
computation. Its distribution is determined by the agent parameters $\phi$ and the
state $s_t$, neither of which involves differentiating through $U(\theta)$.
Consequently, $g_t$ is statistically independent of $\nabla_\theta\mathcal{L}$, and
the barren plateau variance bound $\mathrm{Var}_\theta[\partial_\mu\mathcal{L}]
\in\mathcal{O}(b^{-n})$ places no constraint on $g_t$. The update therefore proceeds
regardless of circuit depth or qubit count.

The barren plateau condition Equation~\ref{eq:jaws_bp} is a statement about the distribution of $\partial_\mu\ell$ over the parameter ensemble induced by $U(\theta)$. In RLQ-Grad, the gradient applied to $\theta_t$ is $g_t = \pi_\phi(s_t)$, a function
of the RL agent's classical parameters $\phi$ and state $s_t$. Neither depends on
differentiating through the quantum circuit and consequently, the Wishart process
concentration result\footnote{Which is Equation~\ref{eq:jaws_bp}, originally derived from Corollary~4 of \cite{anschuetz2024unified}.} has no bearing on
$g_t$. The update proceeds regardless of the algebraic structure of $\mathcal{A}$. \cite{anschuetz2024unified} also show that
the precise scaling is governed by $\mathrm{dim}_\mathbb{R}(\mathrm{Aut}(\mathcal{A}_\alpha))$ and the entanglement structure of $\rho$ with respect to $\mathcal{A}$ (through $\mathrm{Tr}(\rho^\alpha)^2$), not solely by circuit depth.

Note that Theorem~\ref{thm:bypass} is also structural: it holds for any RL policy $\pi_\phi$ and any circuit $U(\theta)$, irrespective of whether the agent produces useful gradient estimates. It establishes only that the exponential vanishing pathology is absent from the update rule itself.

\subsection{Proof of Theorem \ref{thm:agent_grad}}\label{proof:thm:agent_grad}

The policy loss $\mathcal{L}_{\mathrm{RL}}(\phi)$ depends on $\phi$ only through the
classical computation $\pi_\phi(s_t)$. Differentiating via the chain rule gives
\begin{equation}
    \nabla_\phi \mathcal{L}_{\mathrm{RL}}
    = \sum_t \frac{\partial \mathcal{L}_{\mathrm{RL}}}{\partial g_t}
      \frac{\partial g_t}{\partial \phi},
\end{equation}
where $\partial g_t/\partial\phi$ is the Jacobian of the MLP output with respect to
its weights. This quantity is governed by classical backpropagation through a
feedforward network and satisfies standard gradient concentration bounds for
finite-width MLPs. In particular, its variance is polynomial in the network width $d$
and depth $L_{\mathrm{MLP}}$, not exponential in $n$. Since no quantum circuit
evaluation appears in this computation, no gradient variance decay stemming from quantum barren plateau can arise.\footnote{Still, note that Theorem~\ref{thm:agent_grad} does not assert that $\nabla_\phi\mathcal{L}_{\mathrm{RL}}$ is well-behaved in all respects. Classical MLPs can suffer from vanishing or exploding gradients when $L_{\mathrm{MLP}}$ is large, a problem addressed by standard techniques (batch normalisation, residual connections, careful initialisation). Critically, however, these pathologies are independent of $n$ and do not worsen as the quantum system scales.}

\section{Miscellaneous and settings}\label{apd:misc}

\subsection{Preliminary studies and agent training settings}

In our main experiments, we train the agent using the Proximal Policy Optimization (PPO)~\cite{schulman2017proximalpolicyoptimizationalgorithms} algorithm with the default multilayer perceptron policy as implemented in Stable-Baselines3 (SB3)~\cite{stable-baselines3}. Justifications for this selection are made based on the results of our preliminary study across different model types, as noted in Table~\ref{tab:arch_bench}.

\begin{table*}[t]
    \caption{Preliminary studies on different agent type with RLQ-Grad. $l_{\mathrm{rollout}}$ is correspondingly 400 for BC, 46800 for MNIST and FMNIST, 39000 for CIFAR10.}
    \centering
    \resizebox{\textwidth}{!}{
        \begin{tabular}{@{}lllll*{4}{c}@{}}
        \toprule
        \multirow{3}{*}{\bf Agents} & \multirow{3}{*}{\bf Buffer type}& \multirow{3}{*}{\bf Buffer size} & \multirow{3}{*}{\bf Params} & \multirow{3}{*}{\bf MACs} & \multicolumn{4}{c}{\bf Top 1 validation accuracy per dataset}\\ 
        \cmidrule{6-9}
        & & & & & BC & MNIST & F-MNIST & CIFAR10\\
        \midrule
        SAC & Replay & 1000000 & 17,038 & 16,640 & $95.61\pm1.519$ & $65.53\pm0.521$ & $72.27\pm0.437$ & $28.64\pm0.516$\\
        TD3 & Replay & 1000000 & 16,648 & 16,256 & $96.71\pm0.380$ & $68.55\pm0.989$ & $73.52\pm0.367$ & $28.72\pm0.287$\\
        DDPG & Replay & 1000000 & 11,079 & 10,816 & $85.53\pm6.837$ & $55.23\pm0.372$ & $65.28\pm0.098$ & $23.14\pm0.457$\\
        TRPO & Rollout & $l_{\mathrm{rollout}}$ & 10,695 & 10,432 & $93.20\pm1.899$  & $60.27\pm0.298$ & $70.65\pm0.428$ & $26.46\pm0.623$\\
        PPO & Rollout & $l_{\mathrm{rollout}}$ & 5,575 & 5,440 & $90.05\pm4.760$ & $59.52\pm0.212$ & $68.52\pm0.534$ & $25.12\pm0.961$\\
        \bottomrule
        \end{tabular}
    }
    \label{tab:arch_bench}
    \vspace{-0.7\intextsep}
\end{table*}

Table~\ref{tab:arch_bench} shows a consistent performance hierarchy across all datasets, with off-policy methods (Soft Actor-Critic (SAC)~\cite{haarnoja2018soft}, Twin Delayed Deep Deterministic (TD3)~\cite{fujimoto2018addressing}, Deep Deterministic Policy Gradient (DDPG)~\cite{lillicrap2020continuous}) outperforming on-policy counterparts (Trust Region Policy Optimization (TRPO)~\cite{schulman2015trust} and PPO). In particular, TD3 achieves the best overall accuracy on all four datasets, followed closely by SAC, while DDPG exhibits the weakest performance among replay-based methods. On-policy methods (TRPO and PPO) consistently lag behind, indicating a performance gap between replay-buffer-based and rollout-based training strategies under the same architectural constraints.

A clear distinction emerges between replay and rollout buffers. Replay-based agents (SAC, TD3, DDPG) significantly outperform rollout-based agents (TRPO, PPO) across all datasets. This trend is especially pronounced on more complex datasets such as CIFAR10, where TD3 and SAC achieve approximately $28.7\%$ accuracy, compared to $26.46\%$ for TRPO and $25.12\%$ for PPO.

Despite relatively small differences in parameter count and MACs, performance varies substantially. TD3 and SAC have comparable computational costs ($\sim$16k parameters and MACs), yet TD3 consistently achieves slightly better accuracy with lower variance. DDPG, while having fewer parameters ($\sim$11k), suffers from significantly degraded performance and higher variance.

All methods achieve high accuracy on the simplest dataset (BC), with TD3 reaching $96.71\%$, while performance drops significantly as dataset complexity increases. The relative ordering of methods, however, remains consistent across datasets. Variance across runs further differentiates the methods. TD3 demonstrates both high accuracy and low variance, indicating stable training dynamics. In contrast, DDPG and PPO exhibit larger standard deviations (e.g., $6.837$ and $4.760$ on BC).

On the other hand, PPO achieves competitive performance relative to TRPO despite having nearly half the parameters and MACs (5,575 vs. 10,695), making them an interesting candidate to further improve on. Furthermore, while achieving higher accuracy in overall, agents that used replay buffer (such as SAC, TD3 and DDPG) are significantly more memory consuming on CPU/GPU in comparison with those that use rollout buffers, eventually failing the promise of low resource consumption that we made initially.

All benchmarked policy and value networks are jointly optimized using stochastic gradient descent with a fixed learning rate of $5e-4$. The minibatch size is statically set to be equal with $\mathrm{total \textunderscore supervised\textunderscore training \textunderscore epoch}$\footnote{Which is around 200, meaning that the rollout buffer is shuffled and split into batches of 200 samples for agent gradient updates with SGD.}. To preserve memory efficient and avoid asynchronous gradient propose/model update problem, we only use 1 environment to train all the RL agents. At each training iteration, all agent that use rollout buffer collects a rollout of length (rollout buffer size) equal to the total training steps, calculated as:
\begin{equation}
    l_{\mathrm{rollout}} = \mathrm{int}\left(\frac{\mathrm{total \textunderscore training \textunderscore sample}}{\mathrm{supervised \textunderscore training \textunderscore batch \textunderscore size}}\right) \cdot \mathrm{total \textunderscore supervised\textunderscore training \textunderscore epoch},
\end{equation}
after which the collected transitions are used to perform multiple epochs of minibatch updates. This rollout configuration is motivated by the observation that larger values of $l_{\mathrm{rollout}}$ lead to more stable training dynamics and ultimately improved convergence, even over prolonged training durations. Benchmarking results that support this claim using TRPO and PPO on MNIST can be found in Table~\ref{tab:rollout_ablation}.

\begin{table*}[h]
    \caption{Accuracy comparison of PPO and TRPO across $l_{\mathrm{rollout}}$ length on MNIST.}
    \centering
    \resizebox{0.8\width}{!}{
        \begin{tabular}{@{}l*{5}{c}@{}}
        \toprule
        \multirow{3}{*}{\bf Agents} & \multicolumn{5}{c}{\bf Top 1 validation accuracy per dataset}\\ 
        \cmidrule{2-6}
        & $0.2 \cdot l_{\mathrm{rollout}}$ & $0.4 \cdot l_{\mathrm{rollout}}$ & $0.6 \cdot l_{\mathrm{rollout}}$ & $0.8 \cdot l_{\mathrm{rollout}}$ & $l_{\mathrm{rollout}}$\\
        \midrule
        TRPO  & $57.13\pm0.199$ & $58.35\pm0.467$ & $58.73\pm0.512$ & $59.35\pm0.408$ & $60.27\pm0.298$\\
        PPO & $57.22\pm0.204$ & $57.62\pm0.259$ & $58.79\pm0.315$ & $58.94\pm0.342$ & $59.52\pm0.212$\\
        \bottomrule
        \end{tabular}
    }
    \label{tab:rollout_ablation}
\end{table*}

We also fixed agent initialization random seed to ensure reproducibility across runs. Unless otherwise specified, all other agent hyperparameters (e.g., clipping range, number of epochs, GAE parameters, and entropy coefficient) are kept at their default values provided by SB3.

\subsection{Stablization tricks with spectral normalization}\label{sec:sn_norm_vf}

Training RL agents with function approximation is notoriously unstable, particularly in the bootstrapped regime of temporal-difference (TD) learning where approximation errors compound through iterated Bellman updates~\cite{sutton1998reinforcement}. Spectral normalization (SN) has emerged as a principled remedy for this instability, with demonstrated effectiveness in stabilizing Q-networks~\cite{gogianu2021spectral} and actor-critic architectures~\cite{bjorck2021towards, luu2026robust}. To avoid this problem, we can adopt SN as a method for stabilizing the training process of our algorithm. Consider a feed-forward network $\mathcal{F}_\theta$ with $L$ layers:
\begin{equation}
    z_i = W_i\, a_{i-1} + b_i, \qquad a_i = \sigma(z_i), \qquad i = 1, \dots, L,
\end{equation}
where $\sigma(\cdot)$ is the ReLU activation for all hidden layers ($i < L$). For a designated subset of layers $\mathcal{S} \subseteq \{1, \dots, L\}$, SN replaces each weight matrix with its spectrally normalized counterpart:
\begin{equation}\label{eq:sn_weight}
    \hat{W}_i = \frac{W_i}{\rho_i}, \qquad \rho_i = \max\bigl(\rho(W_i),\; k\bigr),
\end{equation}
where $\rho(W_i) = \sigma_{\max}(W_i)$ denotes the largest singular value (spectral norm) of $W_i$, $k > 0$ is a small constant preventing degenerate scaling, and a stop-gradient is applied to $\rho_i$ during backpropagation so that normalization acts as a constraint rather than a differentiable transformation.
 
By capping the spectral norm of each layer, SN enforces a Lipschitz bound on the end-to-end mapping. Specifically, for a network with ReLU activations, the Lipschitz constant satisfies:
\begin{equation}\label{eq:lipschitz_bound}
    \mathrm{Lip}(\mathcal{F}_\theta) \;\leq\; \prod_{i \in \mathcal{S}} \rho_i^{-1} \cdot \prod_{i=1}^{L} \|W_i\|_2 \;\leq\; \prod_{i \notin \mathcal{S}} \|W_i\|_2,
\end{equation}
which directly limits how rapidly the network output can change with respect to its input. In the RL context, this Lipschitz constraint serves two critical purposes: (i)~it bounds the propagation of approximation error through bootstrapped TD targets, preventing the runaway divergence observed in unconstrained networks~\cite{bjorck2021towards}; and (ii)~it promotes smoother value landscapes, improving generalization across nearby states.
 
Beyond its regularization effect, SN induces a data-dependent modulation of gradient magnitudes that fundamentally alters the optimization dynamics. Under a bias-scaling reformulation~\cite{gogianu2021spectral}, the pre-activations of the normalized network satisfy:
\begin{equation}
    \hat{z}_i = \rho_{1:i}^{-1}\, z_i, \qquad \text{where} \quad \rho_{1:i} = \prod_{\substack{j \leq i,\; j \in \mathcal{S}}} \rho_j.
\end{equation}
The loss computed on the normalized network and the gradient with respect to the unnormalized parameters is therefore:
\begin{equation}
    \hat{\mathcal{L}} = \mathcal{L}\!\bigl(\rho_{1:L}^{-1}\, z_L\bigr),\quad 
    \frac{\partial \hat{\mathcal{L}}}{\partial W_i} = \rho_{1:L}^{-1}\, J_i\, \hat{\delta}_L\, a_{i-1}^\top,
    \label{eq:grad_rescale}
\end{equation}
where $J_i$ is the Jacobian of all layers subsequent to layer $i$ and $\hat{\delta}_L = \partial \hat{\mathcal{L}} / \partial \hat{z}_L$. Equation~\ref{eq:grad_rescale} reveals that SN imposes a \emph{global} rescaling of all gradients by $\rho_{1:L}^{-1}$, the inverse product of spectral norms across normalized layers. As network weights grow during training (a natural consequence of fitting increasingly complex value landscapes), $\rho_{1:L}$ increases monotonically, progressively attenuating gradient magnitudes. This gradient rescaling has a particularly consequential interaction with adaptive optimizers. Recall the Adam update rule~\cite{kingma2014adam}:
\begin{equation}
    \Delta\theta_t = \eta \cdot \frac{\hat{m}_t}{\sqrt{\hat{v}_t} + \varepsilon},
\end{equation}
where $\hat{m}_t$ and $\hat{v}_t$ are bias-corrected estimates of the first and second gradient moments, respectively. When gradients are globally rescaled as $g_t' = \rho_t^{-1}\, g_t$ (with $\rho_t \coloneqq \rho_{1:L}$ at iteration $t$), the moment estimates transform as:
\begin{equation}
    \hat{m}_t' \approx \rho_t^{-1}\, \hat{m}_t, \qquad \hat{v}_t' \approx \rho_t^{-2}\, \hat{v}_t.
\end{equation}
Substituting into the Adam update yields:
\begin{equation}\label{eq:adam_modified}
    \Delta\theta_t' \approx \eta \cdot \frac{\hat{m}_t}{\sqrt{\hat{v}_t} + \rho_t\, \varepsilon}.
\end{equation}
The key observation is that SN effectively multiplies the damping constant $\varepsilon$ by $\rho_t$, which grows monotonically over training. This induces a natural three-phase optimization trajectory:
\begin{enumerate}[leftmargin=*]
    \item \textbf{Early training} ($\rho_t \approx 1$): The modified update Equation~\ref{eq:adam_modified} reduces to standard Adam. The optimizer retains its full adaptive behavior, enabling rapid initial learning and efficient exploration of the loss landscape.
    \item \textbf{Mid training} ($\rho_t$ growing): The inflated denominator uniformly reduces step sizes across all parameters, acting as a stabilizer that dampens sensitivity to noisy gradient estimates, which is particularly important in RL, where minibatch composition and advantage estimation introduce high variance.
    \item \textbf{Late training} ($\rho_t\, \varepsilon \gg \sqrt{\hat{v}_t}$): The update asymptotically approaches
    \begin{equation}
        \Delta\theta_t' \;\approx\; \frac{\eta}{\rho_t\, \varepsilon}\, \hat{m}_t,
    \end{equation}
    which is equivalent to SGD with momentum at a decaying effective learning rate $\eta_{\mathrm{eff}} = \eta / (\rho_t\, \varepsilon)$. This automatic transition from adaptive to momentum-based updates is highly desirable: adaptive methods excel at navigating the complex early-training landscape but suffer from poor curvature estimates in the non-stationary TD regime, whereas SGD with momentum provides more stable convergence near optima~\cite{gogianu2021spectral}.
\end{enumerate}
 
This optimization-centric interpretation is empirically validated by~\cite{gogianu2021spectral}, who show that explicit schedulers mimicking either the gradient rescaling (DIVGRAD) or the $\varepsilon$-scaling (MULEPS) of SN recover comparable performance gains, even in regimes where full SN fails to train.
 
We apply SN selectively to the value and policy networks following the power iteration scheme described in Algorithm~\ref{alg:sn_rl_training_general}. Each normalized layer maintains a pair of auxiliary vectors $(\mathbf{u}, \mathbf{v})$ that approximate the leading left and right singular vectors of $W_i$ via a single power iteration step per training update (with a negligible computational overhead).
 
\begin{algorithm}[t]
    \caption{RL Training with Spectral Normalization}
    \label{alg:sn_rl_training_general}
    \begin{algorithmic}[1]
    \REQUIRE Dataset $\mathcal{B}$ (replay buffer or on-policy batch), trainable networks $\mathcal{F}_\theta$, normalized layer set $\mathcal{S}$
    \FOR{each training iteration}
        \STATE Sample batch $\mathcal{T} \sim \mathcal{B}$
        \FOR{each layer $i$ in $\mathcal{F}_\theta$}
            \IF{$i \in \mathcal{S}$}
                \IF{$\mathbf{u}^{(i)}, \mathbf{v}^{(i)}$ not yet initialized}
                    \STATE Initialize $\mathbf{u}^{(i)}, \mathbf{v}^{(i)} \sim \mathcal{N}(0, I)$
                \ENDIF
                \STATE $\tilde{\mathbf{v}} \leftarrow W_i^\top\, \mathbf{u}^{(i)}$; \quad $\mathbf{v}^{(i)} \leftarrow \tilde{\mathbf{v}} / \|\tilde{\mathbf{v}}\|$
                \STATE $\tilde{\mathbf{u}} \leftarrow W_i\, \mathbf{v}^{(i)}$; \quad $\rho_i \leftarrow \|\tilde{\mathbf{u}}\|$; \quad $\mathbf{u}^{(i)} \leftarrow \tilde{\mathbf{u}} / \rho_i$
                \STATE $\bar{W}_i \leftarrow W_i \,/\, \max(1,\; \rho_i)$ \hfill \COMMENT{Clamp spectral norm to $\leq 1$}
            \ENDIF
        \ENDFOR
        \STATE Compute RL loss $\mathcal{L}_{\mathrm{RL}}(\mathcal{T};\, \mathcal{F}_\theta)$ using normalized weights $\bar{W}$
        \STATE Update $\theta$ via gradient descent on $\mathcal{L}_{\mathrm{RL}}$
    \ENDFOR
    \end{algorithmic}
\end{algorithm}
 
\begin{figure*}[t!]
    \begin{subfigure}[t]{.325\textwidth}
      \centering
        \includegraphics[width=\textwidth]{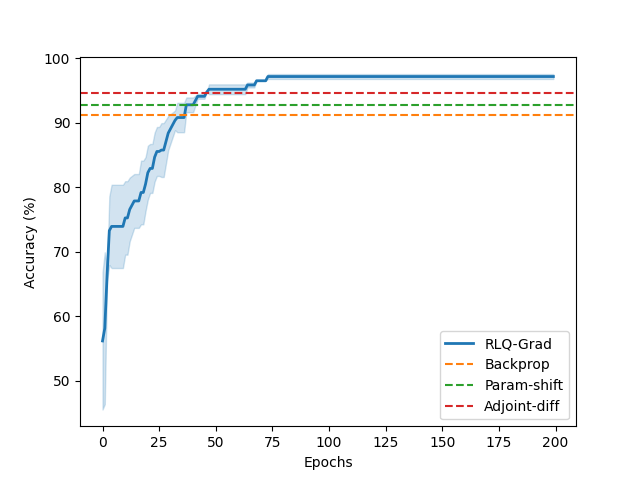}
      \caption{PolicySN[-1] + ValueSN[-1].}
    \end{subfigure}
    \hfill
    \begin{subfigure}[t]{.325\textwidth}
      \centering
        \includegraphics[width=\textwidth]{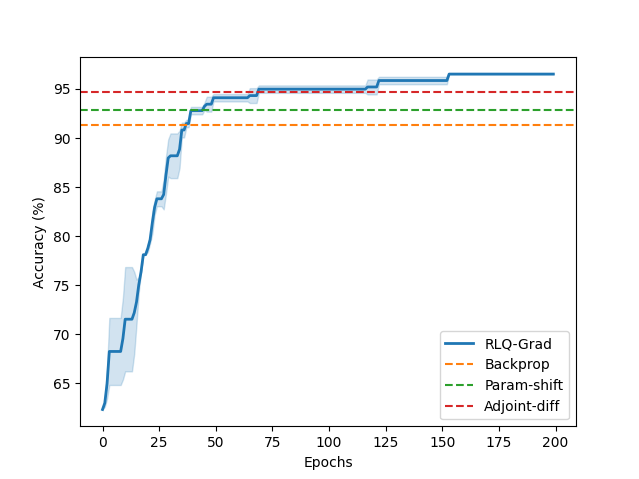}
      \caption{ValueSN[-1].}
    \end{subfigure}
    \hfill
    \begin{subfigure}[t]{.325\textwidth}
      \centering
        \includegraphics[width=\textwidth]{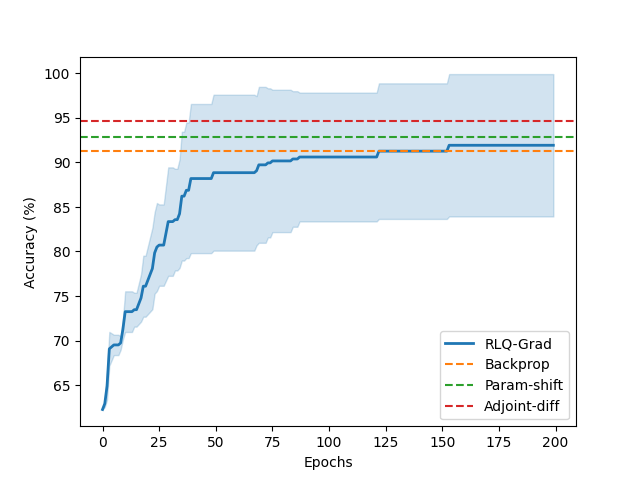}
      \caption{No SN.}
    \end{subfigure}
    \caption{Accuracy convergence of RLQ-Grad (with PPO) variants (with (a) both last layer of policy and value function SN, (b) only last layer of value function SN~\cite{gogianu2021spectral} and (c) no SN applied) on BC dataset (denoted blue with fluctuation region) when compare with maximum achievable accuracy of other gradient calculating methods on 2-qubits HEA.}
    \label{fig:sn_ablation}
\end{figure*}
 
To understand the contribution of SN on different network components, we evaluate three configurations on the BC dataset for the 2-qubit HEA problem (Figure~\ref{fig:sn_ablation}): SN applied to the last layer of both the policy and value networks (PolicySN[-1] + ValueSN[-1]), SN applied to only the last layer of the value network (ValueSN[-1]~\cite{gogianu2021spectral}), and no SN applied.
 
The most striking difference across configurations is in convergence reliability. Configuration~(a), with SN on both networks, achieves the fastest and most stable convergence, reaching near-optimal accuracy ($>93\%$) within approximately 50 training epochs with minimal variance across runs. Configuration~(b), with SN on the value network alone, also converges to comparable final accuracy, albeit with a slightly delayed onset, consistent with the value network being the primary source of bootstrap-induced instability. In contrast, configuration~(c) without SN exhibits dramatically higher variance (visible as the wide blue fluctuation band) and slower, less reliable convergence, with some runs failing to reach the performance plateau within the allotted training budget.
 
The comparison between configurations~(b) and~(c) isolates the stabilizing effect of SN on the critic. The value network directly participates in computing TD targets, making it the primary channel through which approximation errors are amplified. By constraining the Lipschitz constant of the critic, SN bounds the magnitude of these compounding errors (Equation~\ref{eq:lipschitz_bound}), preventing the catastrophic overestimation that often destabilizes training. The near-identical final performance of configurations~(a) and~(b) suggests that value network normalization accounts for the majority of the stability gains.
 
While policy network SN is not strictly necessary for convergence, configuration~(a) demonstrates a measurable advantage in convergence speed over~(b). We attribute this to the implicit learning rate scheduling effect that normalizing the policy network introduces an additional spectral radius factor into the gradient rescaling (Equation~\ref{eq:grad_rescale}), providing finer-grained control over the policy update magnitude.
 
Across all three configurations, the converged accuracy of our RLQ-Grad method matches or exceeds the maximum achievable accuracy of alternative gradient calculation methods (Backprop, Param-shift; shown as dashed horizontal lines in Figure~\ref{fig:sn_ablation}). Notably, even the unstabilized variant~(c) can achieve competitive final accuracy in successful runs, but its high variance makes it unreliable in practice.

\begin{table*}[h]
    \caption{Accuracy comparison between ValueSN[-1] and non-SN PPO across datasets.}
    \centering
    \resizebox{0.8\textwidth}{!}{
        \begin{tabular}{@{}lll*{4}{c}@{}}
        \toprule
        \multirow{3}{*}{\bf Agents} & \multirow{3}{*}{\bf Params} & \multirow{3}{*}{\bf MACs} & \multicolumn{4}{c}{\bf Top 1 validation accuracy per dataset}\\ 
        \cmidrule{4-7}
        & & & BC & MNIST & F-MNIST & CIFAR10\\
        \midrule
        PPO & 5,575 & 5,440 & $90.05\pm4.760$ & $59.52\pm0.212$ & $68.52\pm0.534$ & $25.12\pm0.961$\\
        \bf PPO (SN) & 11847 & 10,432 & $95.52\pm0.080$ & $61.36\pm2.216$ & $73.64\pm0.415$ & $27.17\pm0.782$\\
        \bottomrule
        \end{tabular}
    }
    \label{tab:sn_ablation}
\end{table*}

Table~\ref{tab:sn_ablation} demonstrates that incorporating SN into PPO yields consistent improvements across all datasets. PPO (SN) outperforms the baseline PPO in every case, with particularly notable gains on BC ($+5.47\%$) and F-MNIST ($+5.12\%$). Even on more challenging datasets such as CIFAR10, SN provides a measurable improvement ($+2.05\%$). The magnitude of improvement varies across datasets. Larger gains are observed on intermediate-complexity datasets (e.g., F-MNIST), while improvements are more modest on MNIST and CIFAR10.

The improvements in accuracy come at the cost of increased model complexity. PPO (SN) approximately doubles both the parameter count (11,847 vs.\ 5,575) and MACs (10,432 vs.\ 5,440). Despite this, the performance gains are disproportionately large relative to the increase in computational cost, particularly on BC and F-MNIST.

\subsection{Input dimensionality reduction}\label{apd:pca}

We employ Principal Component Analysis (PCA) to project the high-dimensional input into a lower-dimensional subspace while preserving maximal global variance. To determine the optimal number of features $k$, we evaluate the Cumulative Explained Variance Ratio (CEVR), defined as $\text{CEVR}(k) = \frac{\sum_{j=1}^{k} \lambda_j}{\sum_{j=1}^{d} \lambda_j}$ where $\lambda_j$ represents the eigenvalue associated with the $j$-th principal component, $d$ is the input dimension and captures approximately 95\% of the total variance.

\subsection{Accuracy convergence curves}\label{apd:acc_curves}

Figure~\ref{fig:acc_main} shows the validation accuracy convergence of RLQ-Grad on the 2-qubit HEA for the four datasets of Table~\ref{tab:cross-dataset-eval}, compared with the maximum accuracy achieved by the other gradient computation methods.

\begin{figure*}[h]
    \begin{subfigure}[t]{.245\textwidth}
      \centering
        \includegraphics[width=\textwidth]{ppo_bc_2b_vfsn.png}
      \caption{BC.}
    \end{subfigure}
    \hfill
    \begin{subfigure}[t]{.245\textwidth}
      \centering
        \includegraphics[width=\textwidth]{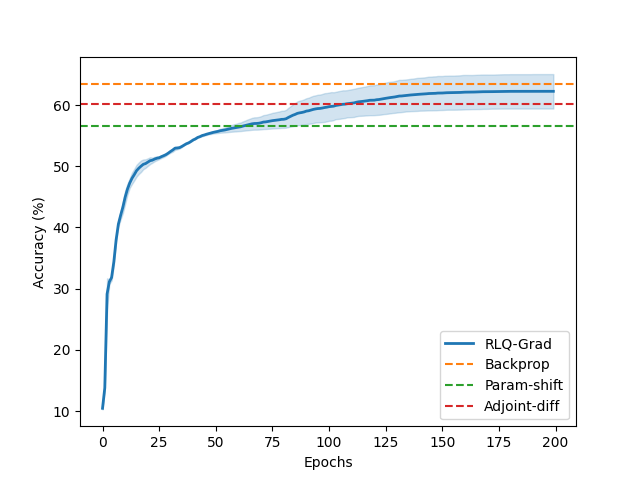}
      \caption{MNIST dataset.}
    \end{subfigure}
    \hfill
    \begin{subfigure}[t]{.245\textwidth}
      \centering
        \includegraphics[width=\textwidth]{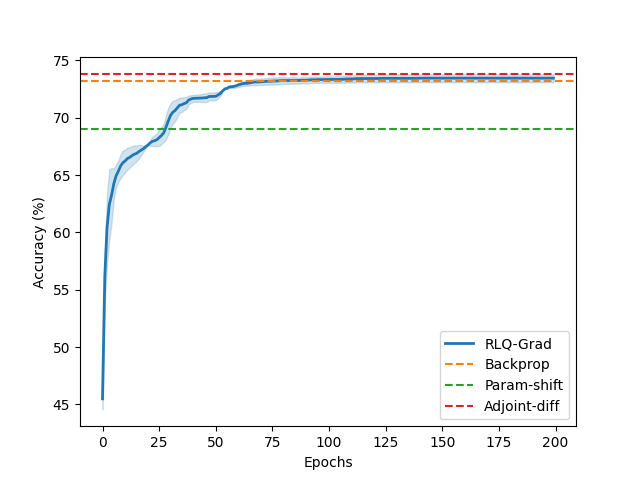}
      \caption{F-MNIST dataset.}
    \end{subfigure}
    \hfill
    \begin{subfigure}[t]{.245\textwidth}
      \centering
        \includegraphics[width=\textwidth]{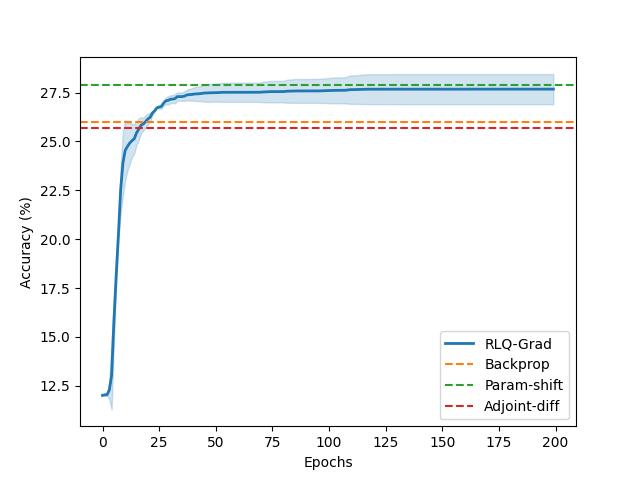}
      \caption{CIFAR-10 dataset.}
    \end{subfigure}
    \caption{Accuracy convergence of RLQ-Grad (denoted blue with fluctuation region) when compare with maximum achievable accuracy of other gradient calculating methods on 2-qubits HEA.}
    \label{fig:acc_main}
    \vspace{-0.8\intextsep}
\end{figure*}

\subsection{Hardware specifications}

Since adjoint differentiation and parameter-shift rule currently lack efficient GPU implementations comparable to backpropagation and our proposed method, the resource allocation benchmark in Figure~\ref{fig:membench} was conducted on a CPU (AMD Ryzen 5 5600G) to assure fairness, averaging 12 timed trials after 4 warm-up iterations. The GPU comparison between RLQ-Grad and backpropagation (Table~\ref{tab:gpu_time}) was run on an NVIDIA RTX 3090 and averaged over 15 trials. In contrast, for accuracy evaluations where minimizing training time is critical, experiments for our method and backpropagation were performed on an NVIDIA RTX 3090 (24GB) GPU.

\subsection{Full resource benchmark values}\label{apd:resource_tables}

Tables~\ref{tab:cpu_mem}, \ref{tab:cpu_time}, and \ref{tab:gpu_time} list the exact values behind Figure~\ref{fig:membench} and the GPU comparison. Rather than measuring only the actor's forward pass, all RLQ-Grad entries account for the complete training pipeline (PPO rollout, policy and value updates, gradient buffers, and Adam states).

\begin{table*}[h]
    \caption{End-to-end CPU RAM usage of gradient computation methods ($n=2$ to $20$ qubits).}
    \centering
    \resizebox{\textwidth}{!}{
        \begin{tabular}{@{}l*{10}{r}@{}}
        \toprule
        Method & 2q1d & 4q2d & 6q3d & 8q4d & 10q5d & 12q6d & 14q7d & 16q8d & 18q9d & 20q10d\\
        \midrule
        Backpropagation & $<0.1$ MB & 187.15 KB & 416.55 KB & 619.74 KB & 1.96 MB & 10.10 MB & 63.30 MB & 282.70 MB & 1227 MB & 6174 MB\\
        Parameter-shift & $<0.1$ MB & 282.89 KB & 1.06 MB & 6.38 MB & 7.32 MB & 9.98 MB & 18.59 MB & 22.37 MB & 24.23 MB & 22.55 MB\\
        Adjoint-differentiation & $<0.1$ MB & $<0.1$ MB & $<0.1$ MB & $<0.1$ MB & $<0.1$ MB & $<0.1$ MB & $<0.1$ MB & $<0.1$ MB & $<0.1$ MB & $<0.1$ MB\\
        \bf RLQ-Grad (full) & \bf 161.22 KB & \bf 215.64 KB & \bf 306.35 KB & \bf 433.33 KB & \bf 596.60 KB & \bf 796.14 KB & \bf 1.02 MB & \bf 1.30 MB & \bf 1.61 MB & \bf 1.95 MB\\
        \bottomrule
        \end{tabular}
    }
    \label{tab:cpu_mem}
\end{table*}

\begin{table*}[h]
    \caption{Per-iteration CPU wall-clock time (in seconds) of gradient computation methods.}
    \centering
    \resizebox{\textwidth}{!}{
        \begin{tabular}{@{}l*{10}{r}@{}}
        \toprule
        Method & 2q1d & 4q2d & 6q3d & 8q4d & 10q5d & 12q6d & 14q7d & 16q8d & 18q9d & 20q10d\\
        \midrule
        Backpropagation & 0.0042 & 0.0123 & 0.0260 & 0.1295 & 0.1033 & 0.1662 & 0.2719 & 0.5112 & 2.3262 & 219.41\\
        Parameter-shift & 0.0067 & 0.0425 & 0.1333 & 0.3921 & 0.9860 & 2.3931 & 5.2019 & 16.2818 & 66.3288 & 693.96\\
        Adjoint-differentiation & 0.0027 & 0.0055 & 0.0109 & 0.0161 & 0.0266 & 0.0569 & 0.0836 & 0.6666 & 10.0897 & 59.3241\\
        \bf RLQ-Grad (full) & \bf 0.0455 & \bf 0.0545 & \bf 0.0641 & \bf 0.0565 & \bf 0.0815 & \bf 0.0858 & \bf 0.0842 & \bf 0.0862 & \bf 0.0875 & \bf 0.0881\\
        \bottomrule
        \end{tabular}
    }
    \label{tab:cpu_time}
\end{table*}

\begin{table*}[h]
    \caption{Per-iteration GPU wall-clock time (in milliseconds, NVIDIA RTX 3090, 15 trials). Parameter-shift and adjoint differentiation are omitted as they lack efficient GPU kernels in current QNN frameworks.}
    \centering
    \resizebox{\textwidth}{!}{
        \begin{tabular}{@{}l*{10}{r}@{}}
        \toprule
        Method & 2q1d & 4q2d & 6q3d & 8q4d & 10q5d & 12q6d & 14q7d & 16q8d & 18q9d & 20q10d\\
        \midrule
        Backpropagation & 23.37 & 65.54 & 126.96 & 218.95 & 322.98 & 451.28 & 605.74 & 807.99 & 1026.45 & 1457.06\\
        \bf RLQ-Grad (full) & \bf 22.51 & \bf 56.60 & \bf 111.49 & \bf 184.76 & \bf 276.88 & \bf 390.89 & \bf 523.60 & \bf 785.93 & \bf 957.31 & \bf 1190.83\\
        \bottomrule
        \end{tabular}
    }
    \label{tab:gpu_time}
\end{table*}

\section{Derivation process of quantum gradient calculation methods}\label{apd:complexity}

Assuming we have a QNN with $n$ qubits and $L$ parameterized gates (or layers), let $P$ denote the number of trainable parameters, typically $P = \mathcal{O}(L)$. A quantum state on $n$ qubits is represented as a complex vector $|\psi\rangle \in \mathbb{C}^{2^n}$, so storing a single state requires a memory complexity of $\mathcal{O}_{\mathrm{memory}}(2^n)$.

\subsection{Forward pass complexity}

Every gate acts on the full statevector, leading to a computational cost of $\mathcal{O}(2^n)$ per gate application. Therefore, the total cost of the forward pass is $\mathcal{O}(L \cdot 2^n)$, and this factor remain unchanged since benchmarked optimization algorithms only affect gradient computations.

\subsection{Backward pass complexity}

\subsubsection{Backpropagation}

Let the circuit be composed of unitary operators $\{U_k\}_{k=1}^L$. The forward evolution is $|\psi_k\rangle = U_k |\psi_{k-1}\rangle$ with $k = 1, \dots, L$. Define the adjoint state $|\lambda_L\rangle = O |\psi_L\rangle$ where $O$ is the observable; the backward recursion is:
\begin{equation}
    |\lambda_{k-1}\rangle = U_k^\dagger |\lambda_k\rangle.
\end{equation}
For a parameterized gate $U_k(\theta_k) = e^{-i \theta_k G_k}$, we have
$\tfrac{dU_k}{d\theta_k} = -iG_k U_k$, so differentiating
$\langle O \rangle = \langle\psi_L|O|\psi_L\rangle$ and collecting both the bra and ket
contributions yields:
\begin{equation}
\frac{\partial \langle O \rangle}{\partial \theta_k}
= 2\,\mathrm{Im} \left(
\langle \lambda_k | G_k | \psi_k \rangle
\right).
\end{equation}
With $L$ evolution steps we obtain a time complexity of $\mathcal{O}_{\mathrm{time}}(L \cdot 2^n)$. For memory complexity, since all intermediate states $\{|\psi_k\rangle\}_{k=0}^L$ must be stored, we have $\mathcal{O}_{\mathrm{memory}}(L \cdot 2^n)$.

\subsubsection{Parameter-Shift Rule}

When using parameter-shift~\cite{mitarai2018quantum}, for a parameter $\theta_i$ the gradient is computed as:
\begin{equation}
\frac{\partial \langle O \rangle}{\partial \theta_i}
= \frac{1}{2} \left[
f\left(\theta_i + \frac{\pi}{2}\right)
- f\left(\theta_i - \frac{\pi}{2}\right)
\right],
\end{equation}
where each evaluation requires a full forward pass. Thus, computing the gradient for all parameters requires $\mathcal{O}_{\mathrm{time}}(P \cdot L \cdot 2^n)$. Since only one statevector is needed at a time per parameter update, the memory complexity is theoretically $\mathcal{O}_{\mathrm{memory}}(2^n)$. Still, in practice many framework such as Pennylane~\cite{bergholm2018pennylane} accumulate all parameters gradient and update the model as a whole, leading to a practical memory complexity of $\mathcal{O}_{\mathrm{memory}}(P_{\mathrm{QNN}} \cdot 2^n)$.

\subsubsection{Adjoint differentiation}

The reverse-mode gradient computation derived in \cite{jones2020efficient} (or for short "adjoint differentiation" as noted in some materials \cite{lee2021adjoint}) is mathematically equivalent to traditional backpropagation, but exploits the reversible structure of quantum circuits to achieve improved memory efficiency.

In classical neural networks, backpropagation computes gradients via reverse-mode accumulation:
\begin{equation}
\boldsymbol{\delta}_k
= \left( \frac{\partial \mathbf{h}_{k+1}}{\partial \mathbf{h}_k} \right)^\top
\boldsymbol{\delta}_{k+1},
\end{equation}
requiring storage of all intermediate activations $\{\mathbf{h}_k\}_{k=1}^L$, leading to memory scaling proportionally with $L$. In quantum neural networks, the analogous reverse propagation is:
\begin{equation}
|\lambda_{k-1}\rangle = U_k^\dagger |\lambda_k\rangle,
\end{equation}
which is directly analogous to multiplication by the transposed Jacobian in classical backpropagation. The gradient of the expectation value $\langle E \rangle$ with respect to parameter $\theta_i$ is:
\begin{equation}
\frac{\partial \langle E \rangle}{\partial \theta_i}
= 2 \,\mathrm{Re} \left(
\langle \lambda_i |
\frac{dU_i}{d\theta_i}
| \phi_i \rangle
\right),
\end{equation}
where the backward state $|\lambda_i\rangle$ and the reconstructed forward state $|\phi_i\rangle$
satisfy the recursions (traversed from $k=L$ down to $k=1$):
\begin{equation}
    |\phi_i\rangle = U_i^\dagger |\phi_{i+1}\rangle, \quad |\lambda_i\rangle = U_{i+1}^\dagger |\lambda_{i+1}\rangle.
\end{equation}
Here $|\phi_{i+1}\rangle$ is the reconstructed forward state at layer $i+1$,
obtained by applying the inverse gate $U_{i+1}^\dagger$ to the current forward state
rather than storing it from the forward pass.

Because quantum circuits are unitary and therefore reversible, the intermediate forward
states $\{|\psi_k\rangle\}$ need not be stored; they are recomputed on-the-fly during
the backward sweep. Consequently, all $P$ gradients are obtained in a single reverse
pass at a cost of $\mathcal{O}_{\mathrm{time}}(P_{QNN} \cdot 2^n)$, matching backpropagation's
time complexity while reducing the memory required beyond the forward statevector to $\mathcal{O}_{\mathrm{memory}}(1)$, since only a fixed number of working buffers (the current forward and backward states) are kept regardless of $L$ and $P_{\mathrm{QNN}}$.

\subsection{RLQ-Grad}\label{apd:rlq_complexity}

In the case of our RLQ-Grad, gradient evaluation reduces to a forward pass through the MLP actor. For an $L_{\mathrm{MLP}}$-layer network of hidden width $d$, whose input $s_t$ and output $g_t$ both have dimension $\mathcal{O}(P_{\mathrm{QNN}})$, the input layer costs $\mathcal{O}(d\,P_{\mathrm{QNN}})$, the $L_{\mathrm{MLP}}-2$ hidden layers cost $\mathcal{O}(d^2)$ each, and the output head costs $\mathcal{O}(d\,P_{\mathrm{QNN}})$. Summing the three terms gives $\mathcal{O}_{\mathrm{time}}(L_{\text{MLP}} d^2 + d\,P_{\mathrm{QNN}})$, with activation memory $\mathcal{O}_{\mathrm{memory}}(d + P_{\mathrm{QNN}})$.

The full training pipeline additionally runs the critic, backpropagates through actor and critic during PPO updates, and maintains the Adam optimizer. Let $P_{\mathrm{RL}} = \mathcal{O}(L_{\mathrm{MLP}} d^2 + d\,P_{\mathrm{QNN}})$ denote the joint actor and critic parameter count. The resident memory consists of the parameters ($P_{\mathrm{RL}}$), their gradient buffers ($P_{\mathrm{RL}}$), and the Adam first and second moments ($2P_{\mathrm{RL}}$), i.e.\ $P_{\mathrm{RL}} + P_{\mathrm{RL}} + 2P_{\mathrm{RL}} = 4P_{\mathrm{RL}}$. This predicts the measured split of approximately $26\%$ (parameters), $26\%$ (gradients), and $52\%$ (Adam states) reported in Section~\ref{sec:resource}. Each PPO update step likewise costs $\mathcal{O}(P_{\mathrm{RL}})$ per sample. Both quantities are linear in $P_{\mathrm{QNN}} = 3nL$ and independent of $2^n$.

\section{Further discussion of other properties}

\subsection{Optimal conditions for RLQ-Grad trainability}\label{apd:trainability}
In this part, we further extend our theoretical framework by assessing the optimal condition for RLQ-Grad to converge upon training, viewed as the following theorem:
\begin{theorem}[Optimal conditions for RLQ-Grad trainability]\label{thm:trainability}
Let $U(\theta)$ be an $n$-qubit HEA circuit with associated JAWS $(\mathcal{A},
\{r_\alpha\}, \{\beta_\alpha\})$. RLQ-Grad achieves efficient training if and only
if both of the following conditions hold:
\begin{itemize}[leftmargin=*]
    \item \textbf{Condition 1 (C1):} Structured QNN on supervise problem satisfy:
    \begin{equation}
        \sum_\alpha \frac{\mathrm{Tr}(O_\alpha)^2\,\mathrm{Tr}(\rho^\alpha)^2}
        {\mathrm{dim}_\mathbb{R}(\mathrm{Aut}(\mathcal{A}_\alpha))}
        = \Omega\!\left(\mathrm{poly}(\log N)^{-1}\right),
    \end{equation}
    ensures the loss (and hence the reward $r_t$ for RLQ-Grad) is sufficiently sensitive to parameter changes for the agent to receive a meaningful training signal (informativeness).
     \item \textbf{Condition 2 (C2):} Selected QNN achieve overparameterization as:
    \begin{equation}
        p \;\geq\; \max_\alpha \beta_\alpha r_\alpha,
    \end{equation}
    ensures local minima concentrate near the global minimum, so that any convergent optimizer finds a good solution.
\end{itemize}
\end{theorem}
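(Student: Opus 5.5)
The plan is to prove the two directions of the ``if and only if'' separately, taking ``efficient training'' to mean two things. First, the agent receives a reward signal whose fluctuations under parameter changes are at least inverse-polylogarithmic, so that a polynomial number of rollouts suffices to estimate advantages. Second, any point to which the RLQ-Grad update $\theta_{t+1}=\theta_t-\eta g_t$ converges is, with high probability, near a global minimum. Both conditions will be read off from the Wishart-process (JAWS) description of \cite{anschuetz2024unified} already invoked in Theorem~\ref{thm:bypass}. C1 controls the variance of the loss through Equation~\ref{eq:jaws_bp}. C2 is the overparameterization threshold under which the Wishart-process local minima concentrate near the global minimum.

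\textbf{Sufficiency.} Assume C1. Then $\mathrm{Var}_\theta[\ell]=\Omega(\mathrm{poly}(\log N)^{-1})$, so loss differences between $s_t$ and $s_{t+1}$ are non-negligible. Since $r_t=\mathrm{acc}_t+(\ell_t+\epsilon)^{-1}$ and $\ell_t$ is bounded, the map $\ell\mapsto(\ell+\epsilon)^{-1}$ is bi-Lipschitz on the relevant range. Reward differences therefore inherit the same lower bound. The advantage estimates in PPO then have signal-to-noise ratio at least $1/\mathrm{poly}(\log N)$, and standard policy-gradient convergence results apply, using the classical backpropagation guarantee of Theorem~\ref{thm:agent_grad}. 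Together these give a policy whose proposals decrease $\ell$ until a stationary point is reached. Now add C2. Once $p\ge\max_\alpha\beta_\alpha r_\alpha$, I would invoke the local-minima theorem of \cite{anschuetz2024unified}, which places local minima near the global minimum with high probability. Any convergent optimizer, RLQ-Grad included, then ends at a good solution.

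\textbf{Necessity.} Argue by contraposition for each condition. If C1 fails, Equation~\ref{eq:jaws_bp} holds and $\ell_t$ concentrates. The reward then concentrates at the same rate, so policy-gradient estimates need exponentially many samples, in line with \cite{arrasmith2021effect}. The structural bypass of Theorem~\ref{thm:bypass} does not help here, because it concerns $g_t$ and not the learning signal for $\phi$. This is the argument I would make precise as the reward-concentration proposition referenced in the limitations. If C2 fails, the underparameterized Wishart analysis gives an exponentially suppressed density of good minima (Theorem~\ref{thm:local_minima}). RLQ-Grad converges to a local minimum, so it fails with high probability regardless of gradient quality.

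\textbf{Main obstacle.} The hard part is rigor in the ``iff'', especially necessity. The JAWS results are statements about random $\theta$ over the circuit ensemble, whereas RLQ-Grad visits a correlated trajectory of parameters. I would first try to restrict ``efficient training'' to typical initializations and first-phase dynamics, where the ensemble statements apply directly. I would also have to state explicitly that the sharp thresholds are the ones given in \cite{anschuetz2024unified}, and accept that the equivalence holds only up to their asymptotic, high-probability sense.
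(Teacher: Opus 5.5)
Your plan follows the paper's proof essentially step for step: necessity of (C1) comes from concentration of the loss and hence of the reward (the paper cites Proposition~\ref{prop:reward_info} and phrases it as $I(g_t; r_t \mid s_t) \to 0$, where you use a sample-complexity/SNR bound), necessity of (C2) comes from Theorem~\ref{thm:local_minima}, and sufficiency combines an informative reward, the $n$-independent classical trainability of Theorem~\ref{thm:agent_grad}, and dense good minima under overparameterization. The remaining differences are minor: the paper closes sufficiency by appealing to the universal approximation theorem for the policy class rather than to policy-gradient convergence, and it does not address the ensemble-versus-trajectory issue that you explicitly flag as the main obstacle.
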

\begin{proof}
We can prove this theorem by contradiction as:
\begin{itemize}[leftmargin=*]
    \item \textbf{Necessity of (C1).} If (C1) fails and the loss concentrates exponentially and $\ell_t$ is indistinguishable from a constant in the reward. By Proposition~\ref{prop:reward_info}, the mutual information $I(g_t; r_t \mid s_t) \to 0$, and the agent cannot improve its policy.
    \item \textbf{Necessity of (C2).} If (C2) fails, by Theorem~\ref{thm:local_minima} the density of local minima near the global optimum is exponentially suppressed. Any convergent optimizer terminates at a poor local minimum with probability exponentially close to 1, regardless of gradient quality. This applies to RLQ-Grad without exception.
    \item \textbf{Sufficiency.} When (C1) holds, the reward is informative and the agent receives a nonvanishing gradient signal through the reward differences. When (C2) holds, good local minima are dense and any convergent optimizer finds a near-optimal solution. Since the agent's own training is governed by a classical MLP whose gradient pathologies are independent of $n$ (Theorem~\ref{thm:agent_grad} below), and the policy class is a universal approximator by the UAT (Appendix~\ref{apd:uat}), the combination is sufficient for efficient training.
\end{itemize}
\end{proof}

\subsection{Under Wishart process framework, how does this affect agent reward informativeness?}\label{thm:reward_affect}

\cite{anschuetz2024unified} also prove that in QNN, the parameter gradient distribution is not Gaussian but follows a Wishart process: the joint distribution of $\partial_{\theta_i}\ell$ conditioned on $\ell(\rho^\alpha;\theta) = z_\alpha$ is:
\begin{equation}
    \hat{\ell}_i \;\asymp\;
    \sum_\alpha \frac{2I_\alpha\sigma_\alpha\,\mathrm{Tr}_\alpha(\rho^\alpha)}{N_\alpha}
    \sqrt{\frac{\beta_\alpha z_\alpha}{I_\alpha o_\alpha}}\;G_{\alpha,i}\,\chi_{\alpha,i},
    \label{eq:grad_dist}
\end{equation}
where $G_{\alpha,i}$ are i.i.d.\ standard normal and $\chi_{\alpha,i}$ are independent
$\chi$-distributed with $\max(2,\beta_\alpha)$ degrees of freedom. This has the
following consequence for the informativeness of the RLQ-Grad reward:
\begin{proposition}\label{prop:reward_info}
The reward $r_t = \mathrm{acc}_t + (\ell_t + \epsilon)^{-1}$ is a monotone function
of $\ell_t$ and $\mathrm{acc}_t$. Under the Wishart process description, the loss
value $z_\alpha = \ell(\rho^\alpha;\theta_t)$ is itself Wishart-distributed with
degrees of freedom $r_\alpha$\footnote{By Theorem~1 of \cite{anschuetz2024unified}}, and the gradient is conditionally distributed as in Equation~\ref{eq:grad_dist}. Two consequences
follow:
\begin{enumerate}[leftmargin=*]
    \item The gradient magnitude scales as $|\partial_{\theta_i}\ell| \propto
    \sqrt{z_\alpha}$, so the gradient is correlated with the loss value. The reward
    $r_t$, which depends on $\ell_t$, therefore carries information about the
    gradient magnitude even in the absence of direct gradient computation.
    \item In the barren plateau regime, $z_\alpha \to 0$ exponentially, causing both
    the gradient and its correlation with the reward to vanish simultaneously. This is
    the fundamental obstruction: when the loss concentrates, so does the reward, and
    the agent receives no directional signal.
\end{enumerate}
\end{proposition}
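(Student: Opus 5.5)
The argument starts from the observation that $r_t$ is a deterministic function of $(\mathrm{acc}_t,\ell_t)$. On the open set where the $\alpha$-components of $\ell_t$ vary, it is smooth and strictly decreasing in $\ell_t$: $\partial r_t/\partial \ell_t = -(\ell_t+\epsilon)^{-2}<0$. Consequently any statistic of $\ell_t$, in particular its correlation with gradient quantities, transfers to $r_t$ up to a monotone reparametrization, with the accuracy term carried along as a bounded additive perturbation in $[0,1]$.

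\textbf{Consequence 1.} I will take Theorem~1 of \cite{anschuetz2024unified} as given, so that $z_\alpha=\ell(\rho^\alpha;\theta_t)$ is Wishart with $r_\alpha$ degrees of freedom. I will also use the conditional law in Equation~\ref{eq:grad_dist}. Taking the absolute value there, $|\hat\ell_i|$ equals $\sqrt{z_\alpha}$ times a prefactor independent of $z_\alpha$ and the independent random variable $|G_{\alpha,i}|\chi_{\alpha,i}$. Hence
\begin{equation}
\mathbb{E}\bigl[|\partial_{\theta_i}\ell| \,\big|\, z_\alpha\bigr]=c_\alpha\sqrt{z_\alpha}, \qquad c_\alpha>0 .
\end{equation}
This gives a nonzero covariance between $|\partial_{\theta_i}\ell|$ and the strictly monotone function $\sqrt{z_\alpha}$, whenever $z_\alpha$ is nondegenerate. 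Composing with the decreasing map $\ell\mapsto(\ell+\epsilon)^{-1}$ yields a strictly negative Spearman-type association between $r_t$ and the gradient magnitude. Equivalently, the mutual information $I(r_t;|\partial_{\theta_i}\ell|)$ is positive. This is the sense in which the reward ``carries information'' about the gradient without the gradient ever being computed.

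\textbf{Consequence 2.} The plan is to invoke Equation~\ref{eq:jaws_bp}: in the barren plateau the variance of $z_\alpha$ decays super-polynomially. I would then use the Wishart moment formulas, namely mean $\propto r_\alpha\sigma_\alpha$ with relative fluctuation $O(r_\alpha^{-1/2})$, after centring at the Haar mean. These would show that $z_\alpha$, and hence $\sqrt{z_\alpha}$ and the conditional gradient scale, concentrate at the same rate. Via the delta method on $r_t$, with $|\partial_\ell r|\le \epsilon^{-2}$, I would bound
\begin{equation}
\mathrm{Var}(r_t\mid \mathrm{acc}_t)\le \epsilon^{-4}\,\mathrm{Var}(\ell_t).
\end{equation}
This is then combined with data processing: $I(g_t;r_t\mid s_t)\le I(g_t;\ell_t,\mathrm{acc}_t\mid s_t)$. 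Finally, a Gaussian-channel or small-variance argument shows that this information vanishes as $\mathrm{Var}(\ell_t)\to0$ at fixed estimation noise. Together these give the simultaneous vanishing claimed in the proposition.

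\textbf{Main obstacle.} The hardest step is reconciling ``$z_\alpha\to0$'' with the Wishart description. Literally it is the centred fluctuation, not $z_\alpha$ itself, that vanishes, so I would restate the claim in terms of deviations from the mean. The accuracy term is a further difficulty: it is a discontinuous function of the outputs and so is not covered by the delta method. I would try to handle it by noting that accuracy is a function of the same concentrated outputs. When these concentrate within the decision margin, accuracy becomes constant with high probability, and the same vanishing then follows for that term as well.
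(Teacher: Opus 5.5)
Your Consequence~1 is the paper's argument, made more explicit: take absolute values in Equation~\ref{eq:grad_dist} to get $|\hat{\ell}_i|\propto\sqrt{z_\alpha}$, then transfer this to $r_t$ through $\ell_t$. Like the paper, you tacitly treat a single component $\alpha$.

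For Consequence~2 the paper simply asserts that $z_\alpha$ itself is small on a barren plateau and reads off both halves. The $\sqrt{z_\alpha}$ factor kills the gradient, and Corollary~4 of \cite{anschuetz2024unified} makes reward differences unresolvable. You are right that Equation~\ref{eq:jaws_bp} bounds a variance rather than $z_\alpha$, but your restatement opens a genuine hole. If only the fluctuations of $z_\alpha$ vanish, nothing prevents $\mathbb{E}[z_\alpha]$ from staying of order one. Then $\sqrt{z_\alpha}$ tends to a nonzero constant, and Equation~\ref{eq:grad_dist} describes a gradient $\approx c_\alpha\sqrt{\mathbb{E}[z_\alpha]}\,G_{\alpha,i}\chi_{\alpha,i}$. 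Its size is set by the prefactor and is untouched by the concentration of $z_\alpha$. ``The conditional gradient scale concentrates'' is not ``the gradient vanishes,'' so the first half of Consequence~2 is unproved. You need a separate bound on the gradient itself. For example, compute $\mathbb{E}[\hat{\ell}_i^2]$ from Equation~\ref{eq:grad_dist} and show that the squared prefactor times $\mathbb{E}[z_\alpha]$ is controlled by the quantity in Equation~\ref{eq:jaws_bp}. Alternatively, invoke the known equivalence between cost concentration and gradient concentration.

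The information step has a related defect. As the reward is written, $\ell_t$, $\mathrm{acc}_t$ and $r_t$ are all determined before $g_t$ is drawn from $\pi_\phi(\cdot\mid s_t)$ with fresh randomness. Hence $I(g_t;r_t\mid s_t)=0$ identically. Your data-processing inequality then reads $0\le 0$ and would equally ``prove'' the absence of signal off the plateau; the paper's claim $I(g_t;r_t\mid s_t)>0$ has the same degeneracy. The directional signal lives in the post-update reward. To first order, its action-dependent part is $\eta(\ell_t+\epsilon)^{-2}\langle\nabla_\theta\ell(\theta_t),g_t\rangle$, plus the change in accuracy.

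This is what makes the vanishing ``simultaneous'': the reward responds to $g_t$ only through the very gradient whose smallness is missing above. Once that bound is in place, your fixed-noise Gaussian-channel argument does give $I\to 0$. Use the local slope $(\ell_t+\epsilon)^{-2}$ there instead of your global Lipschitz bound $\epsilon^{-2}$, which is astronomically large for the small $\epsilon$ intended. If you prefer to stay with ensemble variances, you must argue as in \cite{arrasmith2021effect} that both $\theta_t$ and $\theta_t-\eta g_t$ are typical draws from the parameter ensemble. Because $g_t$ depends on $\theta_t$, that is not automatic.
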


\begin{proof}
Point 1 follows directly from Equation \ref{eq:grad_dist}: $|\hat{\ell}_i| \propto
\sqrt{z_\alpha/I_\alpha o_\alpha}$, and $z_\alpha$ is recoverable (up to noise) from
$\ell_t \in s_t$. The mutual information $I(g_t; r_t \mid s_t) > 0$ therefore whenever
$z_\alpha > 0$ at a distinguishable level above noise. In the barren plateau regime,
$z_\alpha \in \mathcal{O}(\mathrm{poly}(\log N)^{-1})$, so the conditional gradient
distribution of Equation~\ref{eq:grad_dist} also concentrates at zero, making the reward differences unresolvable, eventually proved point 2\footnote{By Corollary~4 of \cite{anschuetz2024unified}.}.
\end{proof}

Combining Theorems~\ref{thm:bypass} and \ref{thm:agent_grad}, RLQ-Grad offers a genuine advantage over conventional gradient methods precisely when the circuit satisfies conidtion 1 but not the gradient-variance condition. This corresponds to circuits with non-trivial Jordan algebraic structure (e.g., LASA circuits~\cite{fontana2024characterizing, ragone2024lie}) where the loss landscape is non-flat but standard gradient computation is exponentially expensive. In circuits where condition 2 itself fails (i.e., the loss is truly exponentially flat) no method can efficiently train the quantum parameters, and the primary utility of RLQ-Grad is instead its computational advantage (Appendix~\ref{apd:rlq_complexity}) rather than any landscape-related benefit. 

\section{Further discussion of limitations}

\subsection{Poor local minima obstruction}~\label{apd:local_min_lim}

\begin{figure*}[t!]
    \centering
    \begin{subfigure}[t]{.245\textwidth}
        \centering
        \includegraphics[width=\textwidth]{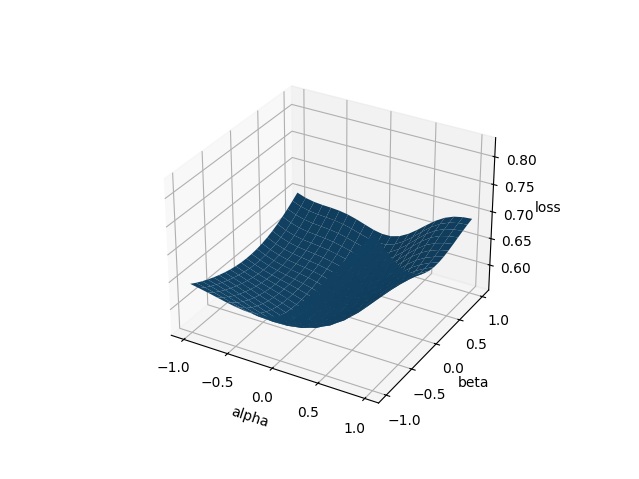}
        \caption{BC, backprop.}
    \end{subfigure}
    \hfill
    \begin{subfigure}[t]{.245\textwidth}
        \centering
        \includegraphics[width=\textwidth]{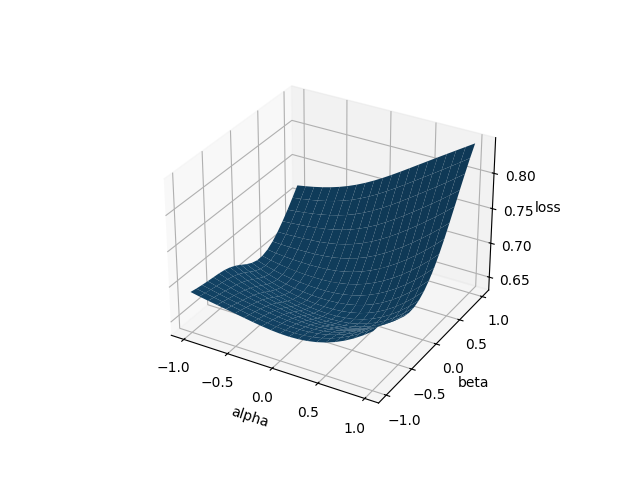}
        \caption{BC, param-shift.}
    \end{subfigure}
    \hfill
    \begin{subfigure}[t]{.245\textwidth}
        \centering
        \includegraphics[width=\textwidth]{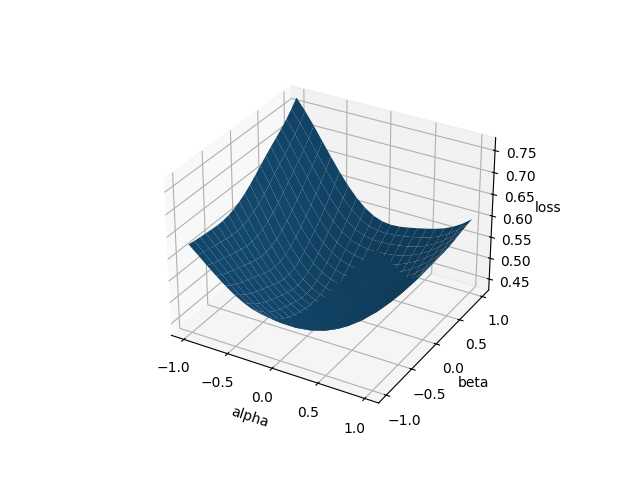}
        \caption{BC, adjoint-diff.}
    \end{subfigure}
    \hfill
    \begin{subfigure}[t]{.245\textwidth}
        \centering
        \includegraphics[width=\textwidth]{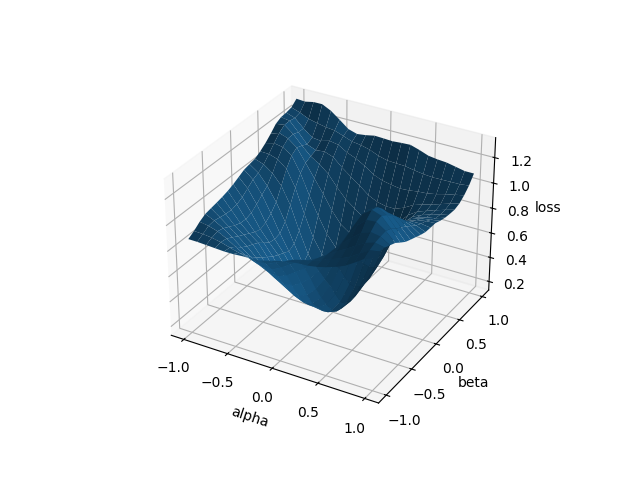}
        \caption{BC, RLQ-Grad.}
    \end{subfigure}
    \par\medskip
    \begin{subfigure}[t]{.245\textwidth}
        \centering
        \includegraphics[width=\textwidth]{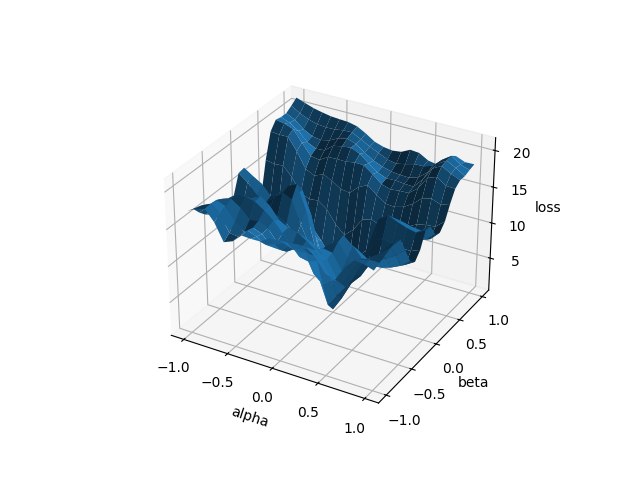}
        \caption{F-MNIST, backprop.}
    \end{subfigure}
    \hfill
    \begin{subfigure}[t]{.245\textwidth}
        \centering
        \includegraphics[width=\textwidth]{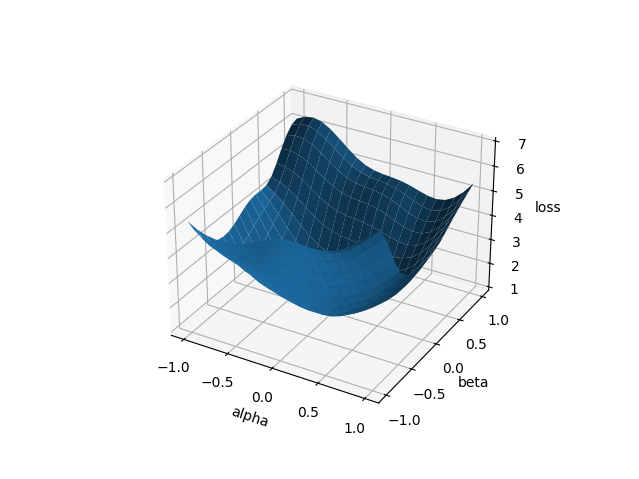}
        \caption{F-MNIST, param-shift.}
    \end{subfigure}
    \hfill
    \begin{subfigure}[t]{.245\textwidth}
        \centering
        \includegraphics[width=\textwidth]{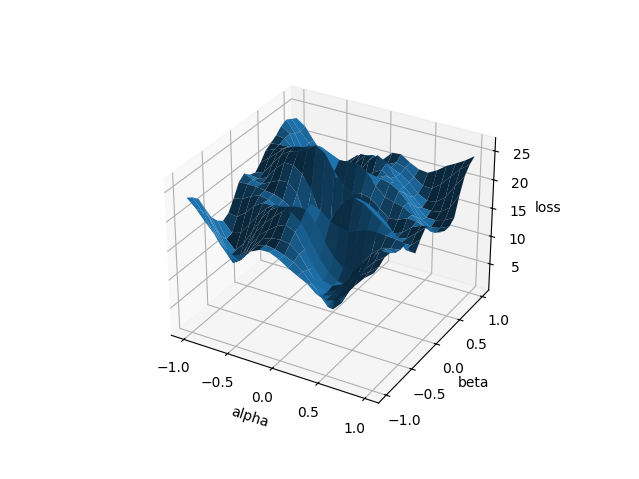}
        \caption{F-MNIST, adjoint-diff.}
    \end{subfigure}
    \hfill
    \begin{subfigure}[t]{.245\textwidth}
        \centering
        \includegraphics[width=\textwidth]{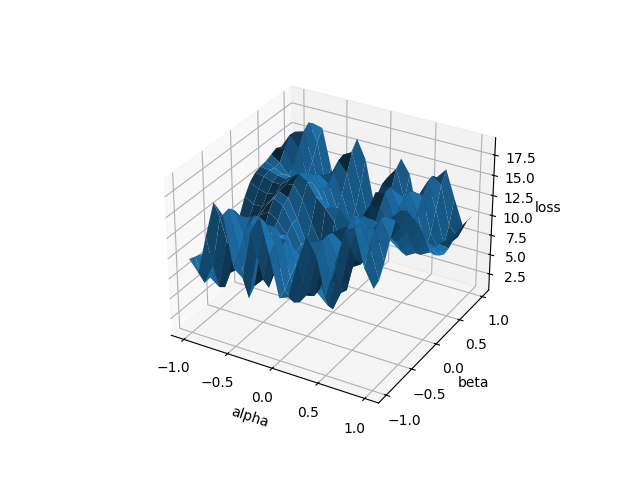}
        \caption{F-MNIST, RLQ-Grad.}
    \end{subfigure}
    \caption{Loss landscape visualization across benchmarked gradient computation methods over HEA-2b1q on BC and F-MNIST. Loss landscape change from smooth on BC dataset (overparameterized) to becomes more noisy (underparameterized) with sharp local minima in F-MNIST.}
    \label{fig:ll}
    \vspace{-0.7\intextsep}
\end{figure*}

The previous analysis identified only the barren plateau as an obstruction to
RLQ-Grad. \cite{anschuetz2024unified} identify a second, independent obstruction
absent from earlier analyses:
\begin{theorem}[Poor local minima obstruction]
\label{thm:local_minima}
Under the JAWS framework~\cite{anschuetz2024unified}, even when the barren plateau
condition of Equation~\ref{eq:jaws_bp} is non-vanishing (i.e., gradients are not exponentially
suppressed), the density of local minima near the global minimum is exponentially
small if and only if the network is underparameterized:
\begin{equation}
    p < \max_\alpha \beta_\alpha r_\alpha,
    \label{eq:overparameterization}
\end{equation}
where $p$ is the number of trained parameters, $\beta_\alpha \in \{1,2,4\}$
corresponds to the field $\mathbb{F}_\alpha \in \{\mathbb{R}, \mathbb{C},
\mathbb{H}\}$ of the $\alpha$-th Jordan component, and $r_\alpha$ is the degrees
of freedom. In this regime, an optimizer (including RLQ-Grad) converges to a
poor local minimum with overwhelming probability, regardless of gradient quality.
\end{theorem}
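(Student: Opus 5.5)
The plan is to reduce Theorem~\ref{thm:local_minima} to the local-minima analysis of \cite{anschuetz2024unified} and then observe that the conclusion depends only on the landscape, not on how updates are produced. First, fix the JAWS data $(\mathcal{A},\{r_\alpha\},\{\beta_\alpha\})$ of the HEA. Under the Wishart-process description already used in Proposition~\ref{prop:reward_info}, the loss restricted to each simple component, $z_\alpha=\ell(\rho^\alpha;\theta)$, is Wishart with $r_\alpha$ degrees of freedom over the field $\mathbb{F}_\alpha$. The gradient is conditionally distributed as in Equation~\ref{eq:grad_dist}. The Hessian at a critical point is then a Wishart-type random matrix of size $p$ whose rank is governed by $\beta_\alpha r_\alpha$.

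The core step is a Kac--Rice count of critical points at loss level $z$. I would write the expected number of local minima as $\mathbb{E}[\,|\det \nabla^2\ell|\,\mathbf{1}\{\nabla^2\ell\succeq 0\}\mid \nabla\ell=0,\ \ell=z]\,p_{\nabla\ell}(0\mid z)\,p_\ell(z)$. Each factor is then evaluated with the conditional law of Equation~\ref{eq:grad_dist}.

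\textbf{Case $p\ge\max_\alpha\beta_\alpha r_\alpha$.} The $p$ gradient directions saturate the $\beta_\alpha r_\alpha$ effective degrees of freedom of every component. The positive-definiteness constraint then costs only polynomial factors, and the resulting density concentrates near $z=\min\ell$.

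\textbf{Case $p<\beta_\alpha r_\alpha$ for some $\alpha$.} The Hessian restricted to that component is a rank-deficient Wishart matrix. I would use the large-deviation rate of its smallest eigenvalue: demanding that all $p$ eigenvalues be nonnegative at a level $z$ near the global minimum costs $\exp(-\Omega(\beta_\alpha r_\alpha - p))$. Levels bounded away from the minimum carry order-one mass, which gives the claimed exponential suppression. For the details of both cases I would cite Theorems~2--3 of \cite{anschuetz2024unified} rather than rederive the eigenvalue large deviations. This step is the main obstacle. Doing it honestly requires matching the paper's informal ``density of local minima near the global minimum'' to their precise statement, and checking the threshold uniformly over $\alpha$, since the maximum over $\alpha$ decides which component is underparameterized.

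Finally, I would transfer the result to any optimizer, RLQ-Grad included. Any convergent iteration $\theta_{t+1}=\theta_t-\eta g_t$ whose limit point is a stationary point of $\mathcal{L}$ terminates at a critical point. By the count above, with probability exponentially close to one over the random landscape, that critical point is a poor local minimum. Theorem~\ref{thm:bypass} changes the distribution of $g_t$ but not the set of stable fixed points of the loss, so the suppression applies ``regardless of gradient quality.''

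The non-vanishing-variance hypothesis (the negation of Equation~\ref{eq:jaws_bp}) enters only to separate this obstruction from the barren plateau. It ensures the local minima are resolvable and the count above is not trivially dominated by flatness.
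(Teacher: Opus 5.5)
Your overall plan matches the paper's. Both obtain the local-minima distribution from \cite{anschuetz2024unified}, then argue that the obstruction lives in the landscape, so any optimizer converging to a local minimum of $\mathcal{L}$ inherits it. The paper does the first step in one line: it cites Corollary~6 of that work, which gives the local-minima density $\kappa(z)$ directly as a convolution of Gamma densities over the underparameterized components, and reads off that little mass sits near $z=0$. The gap is in the mechanism you substitute for that citation. It swaps the two regimes, and the step you call the core would fail if carried out. In the sum-of-squares picture behind the Wishart process, $\ell_\alpha\approx\|v(\theta)\|^2$, where $v$ is a Gaussian process with $\beta_\alpha r_\alpha$ real components and Jacobian $J$. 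The Hessian is then $2J^{\top}J$ plus a curvature term that is linear in the residual $v$, hence vanishes like $\sqrt{z}$. Now $J^{\top}J$ is a $p\times p$ Wishart matrix with $\beta_\alpha r_\alpha$ degrees of freedom, so it is rank-deficient only when $p>\beta_\alpha r_\alpha$, i.e.\ in the \emph{over}parameterized case. For $p<\beta_\alpha r_\alpha$ it is full rank, with smallest eigenvalue near the Marchenko--Pastur edge $(1-\sqrt{p/\beta_\alpha r_\alpha})^2$ in normalized units. So at critical points near the global minimum, the positivity indicator in your Kac--Rice integrand is essentially $1$. There is no smallest-eigenvalue large deviation to exploit there; positivity only bites at large $z$, where the residual term turns critical points into saddles.

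The suppression you need sits in the other Kac--Rice factor. Given $\ell=z$, the gradient has scale $\sqrt{z}$ (Equation~\ref{eq:grad_dist}), so $p_{\nabla\ell}(0\mid z)\propto z^{-p/2}$. Multiplied by the density of $z$, which behaves like $z^{\beta_\alpha r_\alpha/2-1}$, this gives a critical-value density $\propto z^{(\beta_\alpha r_\alpha-p)/2-1}$ near zero. Geometrically, $\nabla\ell=0$ forces the residual into the $(\beta_\alpha r_\alpha-p)$-dimensional complement of the range of $J$. The fraction of minima below any fixed level under the bulk is therefore exponentially small in $\beta_\alpha r_\alpha-p$. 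The same picture gives the converse direction, which the paper's proof does not argue at all. For $p\ge\beta_\alpha r_\alpha$, $J$ is generically surjective, so $\nabla\ell=0$ forces zero residual; the reason is not that ``positivity costs only polynomial factors.'' So either cite the local-minima density directly, as the paper does, or redo the Kac--Rice count with the exponential factor attributed to $p_{\nabla\ell}(0\mid z)\,p_\ell(z)$. Deferring ``the eigenvalue large deviations'' to the reference will not rescue the sketch, since in this model that is not where the suppression comes from.

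Your final transfer step is the same as the paper's and carries the same unstated assumptions. First, it assumes RLQ-Grad's limit is a local minimum of $\mathcal{L}$. Its fixed points are zeros of the policy mean, not of $\nabla\mathcal{L}$, so the remark after Theorem~\ref{thm:bypass} about stable fixed points of the loss does not settle this. Second, it assumes a density of minima translates into a probability for the optimizer's endpoint.
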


\begin{proof}
The density of local minima at loss value $z > 0$ is (to leading multiplicative order) a convolution of Gamma distributions\footnote{By Corollary~6 of \cite{anschuetz2024unified}.}:
\begin{equation}
    \kappa(z) = \mathop{\Asterisk}_{\alpha:\,\gamma_\alpha < 1}
    f_\Gamma\!\left(\frac{z}{o_\alpha\,\mathrm{Tr}(\rho^\alpha)};
    \frac{\beta_\alpha r_\alpha}{2}, \frac{2}{\beta_\alpha r_\alpha}\right),
\end{equation}
where $\gamma_\alpha = p/\sqrt{\beta_\alpha r_\alpha}$ is the overparameterization
ratio. When $\gamma_\alpha < 1$ for some $\alpha$ (Equation \ref{eq:overparameterization} violated), the Gamma distribution has exponential tails, placing an exponentially small fraction of local minima near $z = 0$. Since any gradient-based optimizer (and any surrogate that produces descent directions) converges to a local minimum, it converges to a poor one with probability exponentially close to 1. RLQ-Grad is not exempt from this, since the obstruction is in the loss landscape geometry, not in the gradient computation mechanism.
\end{proof}

More critically, barren plateaus and poor local minima are independent obstructions\footnote{By Definition~7 of \cite{anschuetz2024unified}.}.
A circuit can be:
\begin{enumerate}[leftmargin=*]
    \item Barren plateau affected and underparameterized (doubly obstructed),
    \item Barren plateau free but underparameterized (gradient information
    present, but all accessible local minima are poor),
    \item Overparameterized but barren plateau affected (good minima exist but
    are unreachable by gradient methods),
    \item Both barren plateau free and overparameterized.
\end{enumerate}
While RLQ-Grad bypasses obstruction 3 by replacing the quantum gradient. It does not bypass obstruction 2, which is independent of gradient quality. A visualization of loss landscape across benchmarked gradient computation method can be found in Figure \ref{fig:ll}.

\subsection{Barren plateau induced input gradient decay}~\label{apd:input_grad}

Consider a QNN where the $n$-qubit input state is prepared via amplitude encoding
using $R_x$ rotations, followed by a trainable unitary $U(\boldsymbol{\theta})$ (e.g. a HEA in our experiments). The input gradient $\nabla_{\mathbf{x}}\mathcal{L}$ is structurally analogous to the parameter gradient, yet the two objects differ in a crucial respect: $\boldsymbol{\theta}$ is randomized during initialization, whereas $\mathbf{x}$ is a \emph{fixed} data point.  The question is therefore: whether
$\partial\mathcal{L}/\partial x_i$, viewed as a random variable over the ensemble
of weight initializations, also concentrates exponentially near zero?
 
Short answer is yes, the same Weingarten-calculus argument that governs weight
barren plateaus applies to input gradients, provided the trainable block
$U(\boldsymbol{\theta})$ forms (at least) a unitary 2-design. However, the
interpretation differs: the concentration is over $\boldsymbol{\theta}$
at a fixed $\mathbf{x}$, not over the data distribution. To further understand this problem, we can formalize the following theorem:

\begin{theorem}[Input quantum gradient decay]\label{thm:ibp}
For a fixed input vector $\mathbf{x} = (x_1,\dots,x_n)\in\mathbb{R}^n$ define the
encoding state:
\[
  |\psi(\mathbf{x})\rangle
  = E(\mathbf{x})|0\rangle^{\otimes n},
  \qquad
  E(\mathbf{x}) = \bigotimes_{i=1}^{n} R_x(x_i).
\]
Let $U(\boldsymbol{\theta})$ be an $L$-layer HEA with parameters
$\boldsymbol{\theta}\in[0,2\pi)^M$ drawn uniformly, and let $O$ be an observable
with $\|O\|\le 1$.  The loss is:
\[
  \mathcal{L}(\mathbf{x},\boldsymbol{\theta})
  = \langle\psi(\mathbf{x})|
    U^\dagger(\boldsymbol{\theta})\,O\,U(\boldsymbol{\theta})
    |\psi(\mathbf{x})\rangle.
\]
The input gradient with respect to component $x_i$ is:
\[
  \frac{\partial\mathcal{L}}{\partial x_i}
  = \langle\psi(\mathbf{x})|
    \Bigl[O_{\boldsymbol{\theta}},\,
          \frac{\partial E(\mathbf{x})}{\partial x_i}E(\mathbf{x})^\dagger
    \Bigr]_+
    |\psi(\mathbf{x})\rangle,
\]
where $O_{\boldsymbol{\theta}} = U^\dagger(\boldsymbol{\theta})\,O\,U(\boldsymbol{\theta})$
is the Heisenberg-picture observable, and $[\cdot,\cdot]_+$ denotes the symmetrised
(anti-commutator-like) combination arising from the product rule.
Equivalently, by the parameter-shift rule applied to $x_i$,
\begin{equation}\label{eq:shift}
  \frac{\partial\mathcal{L}}{\partial x_i}
  = \frac{1}{2}
    \Bigl[
      \mathcal{L}\!\left(\mathbf{x}+\tfrac{\pi}{2}\hat{e}_i,\boldsymbol{\theta}\right)
    - \mathcal{L}\!\left(\mathbf{x}-\tfrac{\pi}{2}\hat{e}_i,\boldsymbol{\theta}\right)
    \Bigr].
\end{equation}

Let the trainable block $U(\boldsymbol{\theta})$ with $L=\Omega(\log n)$ layers form
an exact unitary 2-design on $U(2^n)$ when $\boldsymbol{\theta}$ is drawn uniformly
from $[0,2\pi)^M$.  Fix any input $\mathbf{x}\in\mathbb{R}^n$ and any observable $O$
with $\|O\|\le 1$.  Then for every coordinate $i\in\{1,\dots,n\}$, input gradient converge toward zero mean $\mathbb{E}_{\boldsymbol{\theta}}\!\left[\frac{\partial\mathcal{L}}{\partial x_i}\right] = 0$ with exponentially small variance $\mathrm{Var}_{\boldsymbol{\theta}}\!\left[\frac{\partial\mathcal{L}}{\partial x_i}\right]\;\le\; \frac{C}{2^n}$, for a universal constant $C>0$ independent of $\mathbf{x}$, $n$, and $L$. The probabilistic concentration can also be quantified as $\Pr_{\boldsymbol{\theta}}\!\left[\left|\frac{\partial\mathcal{L}}{\partial x_i}\right|\ge\varepsilon\right]\;\le\; \frac{C}{\varepsilon^2\,2^n}$.
\end{theorem}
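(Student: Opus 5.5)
The plan is to rewrite the input derivative as the expectation of a single $\boldsymbol\theta$-independent operator $H_i(\mathbf{x})$ in the Heisenberg-evolved frame. Once in that form, the first two moments follow from the standard first- and second-moment Weingarten formulas for a unitary 2-design. Chebyshev then gives the tail bound.

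\textbf{Step 1 (derivative as a single expectation).} Since $E(\mathbf{x})=\bigotimes_j R_x(x_j)$ and $R_x(x_i)=e^{-ix_iX_i/2}$, we have $\partial_{x_i}E(\mathbf{x}) = -\tfrac{i}{2}X_iE(\mathbf{x})$. Write $\rho_{\mathbf{x}}=|\psi(\mathbf{x})\rangle\langle\psi(\mathbf{x})|$ and $O_{\boldsymbol\theta}=U^\dagger(\boldsymbol\theta)\,O\,U(\boldsymbol\theta)$. Differentiating both bra and ket gives
\[
\frac{\partial\mathcal{L}}{\partial x_i} = \tfrac{i}{2}\,\mathrm{Tr}\bigl(O_{\boldsymbol\theta}[X_i,\rho_{\mathbf{x}}]\bigr) = \mathrm{Tr}\bigl(U O U^\dagger\, H_i\bigr), \qquad H_i:=\tfrac{i}{2}\bigl[X_i,\rho_{\mathbf{x}}\bigr].
\]
This makes the symmetrised bracket in the statement concrete. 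The key facts about $H_i$ are:
\begin{itemize}
\item it does not depend on $\boldsymbol\theta$;
\item it is Hermitian and traceless;
\item it satisfies $\|H_i\|_2^2 \le \tfrac14\|[X_i,\rho_{\mathbf{x}}]\|_2^2 \le 1$, since $\rho_{\mathbf{x}}$ is pure and $\|X_i\|=1$.
\end{itemize}
As a cross-check, Equation~\ref{eq:shift} expresses the same quantity as a difference of two $\mathcal{L}$ values. It can serve as an alternative route, writing $H_i=\tfrac12(\rho_{\mathbf{x}+}-\rho_{\mathbf{x}-})$, which is again traceless with bounded Hilbert--Schmidt norm.

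\textbf{Step 2 (mean).} Because $U$ is a 1-design, $\mathbb{E}_{\boldsymbol\theta}[UOU^\dagger]=\mathrm{Tr}(O)\,I/2^n$. Hence $\mathbb{E}[\partial_{x_i}\mathcal{L}]=\mathrm{Tr}(O)\mathrm{Tr}(H_i)/2^n=0$, because $H_i$ is traceless.

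\textbf{Step 3 (variance).} Let $d=2^n$. Since the mean vanishes, the variance equals $\mathbb{E}[\mathrm{Tr}(UOU^\dagger H_i)^2]$. The second-moment Weingarten formula for traceless $H_i$ gives
\[
\mathbb{E}\bigl[\mathrm{Tr}(UOU^\dagger H_i)^2\bigr] = \frac{1}{d^2-1}\Bigl(\mathrm{Tr}(O^2)-\tfrac{\mathrm{Tr}(O)^2}{d}\Bigr)\mathrm{Tr}(H_i^2).
\]
Bound the factors one at a time:
\begin{itemize}
\item $\mathrm{Tr}(O^2)\le d\,\|O\|^2\le d$;
\item $\mathrm{Tr}(H_i^2)\le 1$ by Step 1.
\end{itemize}
Together these give $\mathrm{Var}\le d/(d^2-1)\le 2/2^n$, so $C=2$ works. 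This $C$ is independent of $\mathbf{x}$ and $L$; the depth enters only through the 2-design hypothesis.

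\textbf{Step 4 (tail bound).} Chebyshev's inequality with zero mean gives $\Pr[|\partial_{x_i}\mathcal{L}|\ge\varepsilon]\le C/(\varepsilon^2 2^n)$.

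\textbf{Main obstacle.} The hard part is Step 3: getting the Weingarten bookkeeping right, and in particular the norm bound on $H_i$. A careless bound would use $\|H_i\|_{\mathrm{op}}$ together with $\mathrm{Tr}(H_i^2)\le d\|H_i\|^2$, which loses the factor $2^n$ entirely. One must instead use that $H_i$ has rank at most $2$, which follows because $\rho_{\mathbf{x}}$ is pure. This gives $\mathrm{Tr}(H_i^2)=O(1)$, and it is this rank property that produces the exponential decay.

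A secondary point is that the hypothesis "$L=\Omega(\log n)$ gives an exact 2-design" is assumed, not proved. The argument uses only exactness of the first two moments. With an $\epsilon$-approximate design, the bound would acquire an additive $O(\epsilon)$ error term.
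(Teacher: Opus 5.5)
Your argument is correct, and it takes a genuinely different route from the paper's.

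There are two harmless slips. First, $\partial_{x_i}\rho_{\mathbf{x}} = -\tfrac{i}{2}[X_i,\rho_{\mathbf{x}}]$, so your $H_i$ carries the opposite sign. Second, $\mathrm{Tr}(O_{\boldsymbol\theta}H_i)=\mathrm{Tr}(U^\dagger O U H_i)$, not $\mathrm{Tr}(UOU^\dagger H_i)$. Neither matters: only the first two moments enter, and for an exact 2-design these are invariant under $U\mapsto U^\dagger$.

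The paper keeps the two shifted states $|\psi_i^\pm\rangle$ and expands the variance as $\tfrac14(\mathbb{E}[(f^+)^2]+\mathbb{E}[(f^-)^2]-2\,\mathbb{E}[f^+f^-])$. It then bounds each of the three second moments separately by $\mathcal{O}(2^{-n})$, using one-state and two-state 2-design formulas. You instead fold the difference into one traceless, rank-$\le 2$ operator and apply the second-moment Weingarten formula once. This makes the $(\mathrm{Tr}\,O)^2$ contributions cancel automatically and gives the exact value $\mathrm{Var} = \mathrm{Tr}(H_i^2)\,(\mathrm{Tr}(O^2)-(\mathrm{Tr}\,O)^2/d)/(d^2-1)$.

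The termwise route does not see this cancellation. Each $\mathbb{E}[(f^\pm)^2]=(\mathrm{Tr}(O^2)+(\mathrm{Tr}\,O)^2)/(d(d+1))$ is of order one whenever $\mathrm{Tr}\,O$ is of order $d$ (e.g.\ $O=I$). So the paper's bound $\mathbb{E}[(f^\pm)^2]\le 2/2^n$ tacitly needs $(\mathrm{Tr}\,O)^2=\mathcal{O}(2^n)$. In addition, its two-state formula is not the correct 2-design expression when $|\langle\phi|\chi\rangle|<1$. The correct one is $((d-s)(\mathrm{Tr}\,O)^2+(ds-1)\mathrm{Tr}(O^2))/(d(d^2-1))$ with $s=|\langle\phi|\chi\rangle|^2$. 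With that cross term, the paper's expansion reproduces exactly your formula, since $\mathrm{Tr}[(\rho^+-\rho^-)^2]/4=(1-s)/2$.

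Your packaging therefore buys full generality in $O$ and a sharper conclusion. Because $X_i$ commutes with $R_x(x_i)$, one has $\langle\psi(\mathbf{x})|X_i|\psi(\mathbf{x})\rangle=\langle 0|X|0\rangle=0$. Hence $\mathrm{Tr}(H_i^2)=\tfrac12$ exactly, the variance is exactly independent of $\mathbf{x}$, and $C=1$ already suffices.
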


Equations~\ref{eq:shift} and the analogous formula for HEA parameter gradient
$\partial\mathcal{L}/\partial\theta_\mu$ are formally identical: both are
half the difference of two expectation values obtained by shifting a rotation angle by
$\pm\pi/2$. The distinction is that $\theta_\mu$ is a stochastic variable while $x_i$
is fixed, consequently the randomness in the input gradient $\partial\mathcal{L}/\partial x_i$ comes entirely from $\boldsymbol{\theta}$. We can proceed to prove our theorem as follow:
\begin{proof}
From the parameter-shift representation in Equation~\ref{eq:shift}:
\[
  \frac{\partial\mathcal{L}}{\partial x_i}
  = \frac{1}{2}\bigl(f^+ - f^-\bigr),
  \qquad
  f^\pm = \langle\psi^\pm_i|
          U^\dagger O\,U
          |\psi^\pm_i\rangle,
\]
where $|\psi^\pm_i\rangle = E(\mathbf{x}\pm\tfrac{\pi}{2}\hat{e}_i)|0\rangle^{\otimes n}$
are two \emph{fixed} product states (they depend on $\mathbf{x}$ but not on
$\boldsymbol{\theta}$).  Write $\rho^\pm = |\psi^\pm_i\rangle\langle\psi^\pm_i|$.
 
Because $U(\boldsymbol{\theta})$ forms a 1-design, we would have:
\[
  \mathbb{E}_{\boldsymbol{\theta}}\bigl[U^\dagger O\,U\bigr]
  = \frac{\mathrm{tr}(O)}{2^n}\,I,
\]
and therefore:
\[
  \mathbb{E}_{\boldsymbol{\theta}}\bigl[f^\pm\bigr]
  = \mathrm{tr}\!\left(\rho^\pm\,\mathbb{E}[U^\dagger O U]\right)
  = \frac{\mathrm{tr}(O)}{2^n},
\]
which is the same for both $f^+$ and $f^-$, giving:
\[
  \mathbb{E}_{\boldsymbol{\theta}}\!\left[
    \frac{\partial\mathcal{L}}{\partial x_i}
  \right]
  = \frac{1}{2}\bigl(\mathbb{E}[f^+] - \mathbb{E}[f^-]\bigr) = 0.
\]

Since $\mathbb{E}[\partial\mathcal{L}/\partial x_i]=0$, the variance equals the second
moment:
\[
  \mathrm{Var}\!\left[\frac{\partial\mathcal{L}}{\partial x_i}\right]
  = \mathbb{E}\!\left[\left(\frac{\partial\mathcal{L}}{\partial x_i}\right)^{\!2}\right]
  = \frac{1}{4}\Bigl(
      \mathbb{E}[(f^+)^2]
    + \mathbb{E}[(f^-)^2]
    - 2\,\mathbb{E}[f^+f^-]
    \Bigr).
\]
 
For any fixed normalised state $|\phi\rangle$ and any 2-design, the Weingarten formula
gives
\begin{equation}\label{eq:wein}
  \mathbb{E}_{U}\!\left[\langle\phi|U^\dagger O U|\phi\rangle^2\right]
  = \frac{\mathrm{tr}(O^2) + \bigl(\mathrm{tr}(O)\bigr)^2}
         {2^n(2^n+1)}.
\end{equation}
This formula holds for any fixed unit vector $|\phi\rangle$; it does not
require $|\phi\rangle$ to be random. Substituting $|\phi\rangle = |\psi^\pm_i\rangle$
(both are unit vectors) into Equation~\ref{eq:wein} and using $\|O\|\le 1$:
\[
  \mathbb{E}[(f^\pm)^2]
  = \frac{\mathrm{tr}(O^2) + (\mathrm{tr}(O))^2}{2^n(2^n+1)}
  \le \frac{2^n + 2^{2n}}{2^n(2^n+1)}
  = \frac{2^n(1+2^n)}{2^n(2^n+1)}
  = 1.
\]
A tighter bound follows from $\mathrm{tr}(O^2)\le 2^n$ (since $\|O\|\le 1$) and
$|\mathrm{tr}(O)|\le 2^n$:
\[
  \mathbb{E}[(f^\pm)^2]
  \le \frac{2\cdot 2^n}{2^n(2^n+1)}
  = \frac{2}{2^n+1}
  \;\le\; \frac{2}{2^n}.
\]
 
For the cross term, the 2-design formula for two different states
$|\phi\rangle,|\chi\rangle$ gives:
\[
  \mathbb{E}_{U}\!\left[
    \langle\phi|U^\dagger O U|\phi\rangle\,
    \langle\chi|U^\dagger O U|\chi\rangle
  \right]
  = \frac{\mathrm{tr}(O^2)\,|\langle\phi|\chi\rangle|^2
          + (\mathrm{tr}(O))^2}{2^n(2^n+1)}.
\]
Since $|\langle\psi^+_i|\psi^-_i\rangle|\le 1$, the cross term satisfies $\mathbb{E}[f^+f^-]\;\le\; \frac{2}{2^n}$. Combining, and absorbing numerical prefactors into the constant $C = 3$:
\[
  \mathrm{Var}\!\left[\frac{\partial\mathcal{L}}{\partial x_i}\right]
  \le \frac{1}{4}\left(\frac{2}{2^n}+\frac{2}{2^n}+\frac{2}{2^n}\right)
  = \frac{3}{2\cdot 2^n}
  \;\le\; \frac{C}{2^n}.
\]
 
By Chebyshev's inequality,
\[
  \Pr\!\left[\left|\frac{\partial\mathcal{L}}{\partial x_i}\right|\ge\varepsilon\right]
  \;\le\;
  \frac{\mathrm{Var}[\partial\mathcal{L}/\partial x_i]}{\varepsilon^2}
  \;\le\;
  \frac{C}{\varepsilon^2\,2^n}.
\]
This completes the proof for all claims.
\end{proof}
 
Let the setup be as in Theorem~\ref{thm:ibp}.  The following distinctions hold between
the input gradient $g_x = \partial\mathcal{L}/\partial x_i$ and the weight gradient
$g_\theta = \partial\mathcal{L}/\partial\theta_\mu$:
\begin{itemize}[leftmargin=*]
  \item \textbf{Source of randomness:} Both $g_x$ and $g_\theta$ are random variables over $\boldsymbol{\theta}$. The input $\mathbf{x}$ is deterministic in $g_x$ while the parameter $\theta_\mu$ itself is part of $\boldsymbol{\theta}$ and contributes additional variance to $g_\theta$.
  \item \textbf{Variance bound:} Both satisfy $\mathrm{Var}[\cdot] \le C/2^n$, with the same asymptotic scaling. The constant $C$ may differ: for $g_\theta$ the gate $G_\mu$ participates in the 2-design argument on both sides of the circuit; for $g_x$ the encoding gate is fixed.
  \item \textbf{Dependence on the data point:} $\mathrm{Var}[g_x]$ is independent of $\mathbf{x}$ (to leading order) because the Weingarten formula Equation~\ref{eq:wein} depends only on $\||\phi\rangle\|=1$, not on which unit vector $|\phi\rangle$ is.
  \item \textbf{Practical implication:} The weight barren plateau obstructs training while the input-gradient barren plateau obstructs input-space optimisation, eventually affecting: quantum adversarial attacks, gradient-based feature attribution (quantum analogues of saliency maps), and hybrid quantum-classical architectures (in our case, the fully connected layers that encode map classical input to actual HEA input, formalize in Corollary~\ref{col:hybrid}) that backpropagate through the quantum layer.
\end{itemize}

Results from Theorem~\ref{thm:ibp} also lead to following corollaries and remarks:
\begin{corollary}[Full input Jacobian]
Under the conditions of Theorem~\ref{thm:ibp}, the expected squared Frobenius norm
of the input Jacobian $J = \nabla_{\mathbf{x}}\mathcal{L}\in\mathbb{R}^n$ satisfies
\[
  \mathbb{E}_{\boldsymbol{\theta}}\!\left[\|J\|^2\right]
  = \sum_{i=1}^{n}\mathrm{Var}\!\left[\frac{\partial\mathcal{L}}{\partial x_i}\right]
  \;\le\; \frac{Cn}{2^n}.
\]
Since $n/2^n \to 0$ exponentially, the entire Jacobian collapses to zero.
\end{corollary}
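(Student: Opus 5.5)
The plan is to reduce the corollary to the coordinatewise bound of Theorem~\ref{thm:ibp} using linearity of expectation. First, the squared Euclidean norm of the Jacobian splits as a sum over coordinates:
\[
\|J\|^2=\sum_{i=1}^n \bigl(\partial\mathcal{L}/\partial x_i\bigr)^2 .
\]
Taking $\mathbb{E}_{\boldsymbol{\theta}}$ and exchanging it with the finite sum gives $\mathbb{E}_{\boldsymbol{\theta}}[\|J\|^2]=\sum_i \mathbb{E}_{\boldsymbol{\theta}}[(\partial\mathcal{L}/\partial x_i)^2]$. This step needs only that each term is integrable, which holds because $\|O\|\le 1$ makes every $\partial\mathcal{L}/\partial x_i$ bounded (by the parameter-shift form, Equation~\ref{eq:shift}).

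Second, I would convert each second moment into a variance. Theorem~\ref{thm:ibp} already shows, via the 1-design property, that $\mathbb{E}_{\boldsymbol{\theta}}[\partial\mathcal{L}/\partial x_i]=0$ for every $i$. Hence $\mathbb{E}[(\partial\mathcal{L}/\partial x_i)^2]=\mathrm{Var}[\partial\mathcal{L}/\partial x_i]$, which establishes the equality in the statement.

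Third, I would apply the variance bound $\mathrm{Var}[\partial\mathcal{L}/\partial x_i]\le C/2^n$ from Theorem~\ref{thm:ibp}. Its constant $C$ is uniform in $i$, $\mathbf{x}$, $n$ and $L$, so summing the $n$ identical bounds gives $Cn/2^n$. The closing remark that the whole Jacobian collapses follows because $n/2^n\to 0$. If a probabilistic statement is wanted, Markov's inequality gives $\Pr[\|J\|\ge\varepsilon]\le Cn/(\varepsilon^2 2^n)$.

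The only point needing care, and the one I would check most closely, is that the uniformity of $C$ across coordinates really comes out of Theorem~\ref{thm:ibp}'s proof. It does, since that proof uses only the fact that $|\psi_i^\pm\rangle$ are unit vectors and never which coordinate was shifted. It also matters that no cross-coordinate covariance terms appear: the quantity is a sum of squares rather than the square of a sum, so no independence between coordinates is required.
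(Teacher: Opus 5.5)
Your proposal is correct and takes the same route the paper does implicitly: the paper gives no separate proof and gets the corollary by writing $\|J\|^2$ as a sum of squared partials, using the zero mean from Theorem~\ref{thm:ibp} to turn each second moment into a variance, and summing the coordinatewise bound $C/2^n$. Your explicit justification of the stated equality through the zero-mean property, and your check that $C$ does not depend on $i$, are exactly the steps the paper leaves unstated.
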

 
\begin{corollary}[Backpropagation through a quantum layer]\label{col:hybrid}
In a hybrid quantum-classical network, the classical layers preceding the quantum
encoding receive gradients scaled by $J$.  By the corollary above,
$\mathbb{E}[\|J\|^2]\le Cn/2^n$, so the classical pre-processing layers also
experience exponentially vanishing gradients, regardless of their own architecture.
\end{corollary}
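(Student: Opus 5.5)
The plan is a chain-rule argument that transfers the Jacobian bound of the preceding corollary to every upstream classical parameter. Write the hybrid model as $\mathcal{L}(\mathbf{x}(\mathbf{w}),\boldsymbol{\theta})$, where $\mathbf{x}(\mathbf{w}) = \mathrm{FC}_{\mathrm{in}}(\mathbf{u},\mathbf{W}_{\mathrm{in}},\mathbf{b}_{\mathrm{in}})$ (or a deeper classical network) maps the raw input $\mathbf{u}$ to the $n$ encoding angles, and $\mathbf{w}$ collects all classical pre-processing weights. The chain rule gives
\[
\nabla_{\mathbf{w}}\mathcal{L} = \Bigl(\frac{\partial \mathbf{x}}{\partial \mathbf{w}}\Bigr)^{\!\top} J ,
\qquad J = \nabla_{\mathbf{x}}\mathcal{L}\big|_{\mathbf{x}=\mathbf{x}(\mathbf{w})}.
\]
The key observation is that $\mathbf{w}$ and $\mathbf{u}$ are not part of the random ensemble $\boldsymbol{\theta}$. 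Conditioned on them, $\mathbf{x}(\mathbf{w})$ is therefore a fixed input, which is exactly the regime in which Theorem~\ref{thm:ibp} applies.

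Next, bound the squared norm:
\[
\|\nabla_{\mathbf{w}}\mathcal{L}\|^2 \le \Bigl\|\frac{\partial \mathbf{x}}{\partial \mathbf{w}}\Bigr\|_{2}^{2}\,\|J\|^2 .
\]
Taking $\mathbb{E}_{\boldsymbol{\theta}}$, the classical Jacobian factor is deterministic given $\mathbf{w},\mathbf{u}$, so it can be pulled out. The full-input-Jacobian corollary, valid for every fixed $\mathbf{x}$ because the constant $C$ in Theorem~\ref{thm:ibp} is independent of $\mathbf{x}$, then gives
\[
\mathbb{E}_{\boldsymbol{\theta}}\bigl[\|\nabla_{\mathbf{w}}\mathcal{L}\|^2\bigr] \le K(\mathbf{w},\mathbf{u})^2\,\frac{Cn}{2^n},
\qquad K(\mathbf{w},\mathbf{u}) = \Bigl\|\frac{\partial \mathbf{x}}{\partial \mathbf{w}}\Bigr\|_2 .
\]
For the FC layer, $K$ is at most of order $\|\mathbf{u}\|$ (plus a constant for the bias), which does not depend on $n$ or $L$. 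For a general classical architecture it is whatever Lipschitz constant that network has. Neither case can cancel the exponential factor, which justifies the phrase ``regardless of their own architecture''. Chebyshev then gives the concentration statement for each component. Because the bound holds pointwise in $\mathbf{u}$, averaging over a data batch preserves it, via Jensen or the triangle inequality on the empirical mean.

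The main obstacle is the loss itself. Theorem~\ref{thm:ibp} concerns the raw expectation $\langle O_{\boldsymbol{\theta}}\rangle$, whereas the actual loss composes it with $\mathrm{FC}_{\mathrm{out}}$ and a supervised loss $\ell$. I would handle this with a second chain-rule factor, $|\partial\ell/\partial\langle Z\rangle| \le \|\mathbf{W}_{\mathrm{out}}\|_2\,\mathrm{Lip}(\ell)$, again deterministic in $\boldsymbol{\theta}$. This requires $\ell$ to be Lipschitz on the bounded range $\langle Z\rangle\in[-1,1]$, which holds for cross-entropy composed with an affine map. A second caveat is that the bound is only ``exponentially vanishing'' for an $n$-independent $K$. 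I would state this assumption explicitly rather than claim it for pathological classical networks whose Jacobian norms grow like $2^{n/2}$.
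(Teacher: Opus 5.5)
Your proposal is correct and is the same chain-rule argument the paper relies on. The corollary's own one-line justification is exactly yours: upstream gradients are $(\partial\mathbf{x}/\partial\mathbf{w})^{\top}J$, and $\mathbb{E}_{\boldsymbol{\theta}}[\|J\|^2]\le Cn/2^n$ by the full-input-Jacobian corollary at the fixed input $\mathbf{x}(\mathbf{w})$. You make explicit what the paper leaves implicit: the $\boldsymbol{\theta}$-independent factor $K(\mathbf{w},\mathbf{u})$ (exactly $\sqrt{\|\mathbf{u}\|^2+1}$ for $\mathrm{FC}_{\mathrm{in}}$), a uniform deterministic bound on the $\boldsymbol{\theta}$-dependent $\mathrm{FC}_{\mathrm{out}}$/loss factor, and the proviso that ``regardless of their own architecture'' needs $K$ to be subexponential in $n$.
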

 
\begin{remark}[Noise exacerbates the effect]
Under depolarising noise with per-gate error rate $p$, the effective observable seen
by the input state is attenuated.  By the same argument as \cite{wang2021noise} for
weight gradients, one obtains:
\[
  \mathrm{Var}\!\left[\frac{\partial\mathcal{L}}{\partial x_i}\right]
  \;\in\; \mathcal{O}\!\left((1-p)^{2L}\cdot 2^{-n}\right),
\]
which is doubly exponentially suppressed in $L$ compared to the noiseless bound.
\end{remark}
 
\begin{remark}[Local cost functions offer no relief for input gradients]
For weight gradients, replacing a global $O$ with a local $k$-qubit observable reduces
the variance from $\mathcal{O}(2^{-n})$ to $\mathcal{O}(2^{-k})$~\cite{cerezo2021cost}. This mitigation still applies to $\mathrm{Var}[g_x]$:
\[
  \mathrm{Var}\!\left[\frac{\partial\mathcal{L}}{\partial x_i}\right]
  \;\in\; \mathcal{O}(2^{-k}), \quad O \text{ acts on } k \text{ qubits}.
\]
However, because the input dimension equals $n$, a reduction to $k\ll n$ qubits
in the observable may destroy the expressiveness needed to solve the learning task,
presenting a fundamental expressiveness-trainability trade-off for input gradients.
\end{remark}

\section{Broader Impact}
\label{app:broader-impact}

RLQ-Grad is a methodological contribution to the optimization of quantum 
neural networks and does not target a specific deployed application. We 
nevertheless identify several downstream considerations.

\subsubsection*{Positive impacts}  By replacing $\mathcal{O}(L \cdot 2^n)$ statevector propagation with an $\mathcal{O}(L_{\mathrm{MLP}} d^2 + d\,P_{\mathrm{QNN}})$ forward pass through a small MLP, RLQ-Grad reduces the CPU cost of simulation-based QNN training by 
2 to 4 orders of magnitude at $n=20$ qubits in our benchmarks, once the full agent training pipeline is accounted for (while adding a small fixed overhead on circuits below roughly 8 qubits). This lowers the 
carbon footprint of QML research and widens access for groups without 
access to large-scale accelerators. Reliable QNN training is a prerequisite for variational approaches to quantum chemistry, condensed-matter simulation, and combinatorial optimization, all of which carry societal upside (e.g., catalyst design, 
battery materials, logistics). A stable, gradient-variance-flat optimizer reduces the run-to-run noise that currently complicates empirical comparisons in the field, supporting healthier scientific norms.

\subsubsection*{Negative and dual-use considerations} Any work that makes near-term variational quantum algorithms more practical contributes, incrementally, to the ecosystem progressing toward fault-tolerant quantum computers capable of breaking RSA and elliptic-curve cryptography. This is a diffuse, long-horizon concern rather 
than an immediate risk (RLQ-Grad operates on small HEAs and does not 
itself bring factoring-relevant circuits within reach) but it is the 
standard dual-use note for QML work. Organizations handling long-lived 
secrets should continue migrating to post-quantum cryptographic primitives. 
The update rule produced by $\pi_\phi$ is not an analytic gradient and cannot be inspected with the tools developed for parameter-shift or adjoint differentiation. In safety/compliance-critical hybrid pipelines, this opacity may hinder auditing, 
failure-mode analysis, and reward-specification validation (failure 
modes familiar from the learned-optimizer literature \cite{andrychowicz2016learning, metz2020tasks}). As a general-purpose QNN optimizer, RLQ-Grad could be applied to any supervised objective, including ones with negative social externalities (surveillance, biometric identification, autonomous targeting). The same disclaimer applies as to 
any generic optimization advance; we do not see a specific misuse channel 
unique to the quantum setting at current hardware scales.

\subsubsection*{Mitigations} We recommend that practitioners deploying 
RLQ-Grad (a) log both the surrogate gradient $g_t$ and the analytic 
gradient $\nabla_\theta \mathcal{L}$ on held-out probes during training 
to detect reward-hacking or silent drift, (b) retain analytic gradient 
verification in regimes where it remains tractable, and (c) report 
per-configuration agent training cost alongside the amortized QNN 
training cost for fair comparisons.

\end{document}